
\documentclass[letterpaper, 10 pt, conference]{ieeeconf}
%
\IEEEoverridecommandlockouts      
\author{Shuang Gao and Peter E. Caines
\thanks{*This work is supported by NSERC (Canada) and the U.S. ARL grant W911NF1910110.}
\thanks{Shuang Gao and Peter E. Caines are with the Department of Electrical and Computer Engineering, McGill University,
  Montreal, QC, Canada \hspace{1cm}
        {\tt\small    $\{$sgao,peterc$\}$@cim.mcgill.ca}. }%
}

\pdfminorversion=4


%

%
\usepackage[noadjust]{cite}

%
\usepackage{graphics}
\usepackage{epstopdf}

%
\usepackage{amsmath}
  \usepackage[font=footnotesize]{subfig}

\usepackage{enumerate}
\usepackage{color}

\usepackage{comment}
\includecomment{comment}
\specialcomment{notes}{\begingroup\sffamily\color{blue}}{\endgroup}



\newcommand{\FA}{\mathbf{A}}   
\newcommand{\FB}{\mathbf{B}}   
\newcommand{\Fu}{\mathbf{u}}	   
\newcommand{\Fx}{\mathbf{x}}	   

\newcommand{\FAW}{\mathbf{A}}	   

\newcommand{\FU}{\mathbf{U}}	   

\newcommand{\Ff}{\mathbf{f}}	   

\newcommand{\Fz}{\mathbf{z}}	   
\newcommand{\Fp}{\mathbf{p}}



\newcommand{\SA}{\mathbf{A^{[N]}}}
\newcommand{\SB}{\mathbf{B^{[N]}}}

\newcommand{\Sxt}{\mathbf{x^{[N]}_t}}

\newcommand{\DSxt}{\mathbf{\dot{x}^{[N]}_t}}

\newcommand{\Sut}{\mathbf{u^{[N]}_t}}



\newcommand{\BI}{\mathbb{I}}	 
\newcommand{\BW}{\mathbb{W}}   
\newcommand{\BT}{\mathbb{T}}

\newcommand{\BA}{\mathbb{A}}
\newcommand{\BB}{\mathbb{B}}
\newcommand{\BR}{\mathds{R}}  

\newcommand{\BZ}{\mathds{Z}_{+}}





\newcommand{\Chi}{\mathds{1}}


\newcommand{\ESZ}{\mathcal{W}_0}
\newcommand{\ESO}{\mathcal{W}_1}

\newcommand{\GSZT}{\ESZ}
\newcommand{\GSOT}{\ESO}

\newcommand*\TRANS{{\mathpalette\doTRANS\empty}}
\makeatletter
\newcommand*\doTRANS[2]{\raisebox{\depth}{$\m@th#1\intercal$}}
\makeatother
\usepackage{url}
\usepackage{amssymb,mathtools}

\usepackage[standard,thmmarks,amsmath]{ntheorem}
\usepackage{times}
\usepackage{dsfont}
\usepackage{tikz}
\usepackage{url}



\hyphenation{op-tical net-works semi-conduc-tor}

\begin{document}
%
\title{Spectral Representations of Graphons in Very Large Network Systems Control}
%
%
%
\maketitle 

\begin{abstract}
Graphon-based control has recently been proposed and developed  to solve control problems for dynamical systems on networks which are very large or growing without bound (see Gao and Caines, CDC 2017, CDC 2018). In this paper, spectral representations, eigenfunctions and approximations of graphons,  and their applications to graphon-based control are studied. First, spectral properties of graphons are presented and then approximations based on Fourier approximated eigenfunctions are analyzed. Within this framework, two classes of graphons with simple spectral representations are given. Applications to graphon-based control analysis are next presented; in particular, the controllability of systems distributed over very large networks is expressed in terms of the properties of the corresponding graphon dynamical systems. Moreover, spectral analysis based upon real-world network data is presented, which demonstrates that low-dimensional spectral approximations of networks are possible. Finally, 
an initial, exploratory investigation of the utility of the spectral analysis methodology in graphon systems control to study the control of epidemic spread is presented. 

\end{abstract}

\section{Introduction }

Graphon theory has been developed to study large networks and graph limits \cite{borgs2008convergent, borgs2012convergent,lovasz2012large}. Recently it has been applied to study dynamical systems such as the heat equation \cite{medvedev2014nonlinear}, coupled oscillators \cite{chiba2019mean} and power network synchronization \cite{Kuehn2018PowerNetsGraphon}, to analyze network centrality \cite{avella2018centrality}, to investigate static and dynamic games \cite{parise2018graphon, PeterMinyiCDC18GMFG}, and to control large networks of dynamical systems \cite{ShuangPeterCDC17, ShuangPeterCDC18, ShuangPeterTAC18,shuangPhDthesis2018,PeterMinyiCDC18GMFG}. Graphon theory provides a theoretical tool for the study of arbitrarily large and, in the limit, infinite network systems, and thus enables low-complexity approximate solutions to control problems on such systems \cite{ShuangPeterCDC17, ShuangPeterCDC18, ShuangPeterTAC18,shuangPhDthesis2018,PeterMinyiCDC18GMFG}.

Among these applications, graphon spectral properties are very significant \cite{ShuangPeterCDC19W2}.  In fact, the spectral analysis of large-scale systems has been studied since the late 1960s \cite{aoki1968control} and it plays a key role to the low-complexity control synthesis of such systems \cite{swigart2014optimal,callier1992lq,aoki1968control}. 
 This leads us to the study of spectral representations and approximations of graphons in this paper.

 As operators, graphons are Hilbert-Schmidt integral operators and hence are compact. Moreover, the symmetry property of graphons ensures that the Spectral Theorem \cite{sauvigny2012partial} applies to graphon operators.  Topics on Hilbert-Schmidt integral operators and self-adjoint compact operators have been extensively studied in the literature (see e.g. \cite{rudin1991functional,mercer1909,sauvigny2012partial}), and the spectral properties of graphons, for instance,  are investigated in  \cite{szegedy2011limits}.

This paper studies the control and analysis of graphon systems  and  their associated networks via the exploitation of their spectral properties. 
The contributions of this paper include 1) the presentation of the spectral representations of two types of graphons, 2) an analysis of the exact controllability of a class of graphon dynamical systems based on spectral decompositions, 3) the study of network spectral properties based on real-world network data which demonstrates that low-dimensional spectral approximations of networks are possible, and 4) the initial, exploratory investigation of the utility of the spectral analysis methodology in graphon systems control to study the controlled epidemic spread process

 \subsection*{Notation}
 $\BR$, $\BR_+$ and $\BZ$ represent respectively the sets of all real numbers, all positive real numbers and all positive integers.  $\langle\cdot ,  \cdot \rangle$ and $\|\cdot \|$ denote respectively inner product and norm. Bold face letters (e.g. $\FA$, $\FB$, $\Fu$) are used to represent graphons and functions.  Blackboard bold letters (e.g. $\BI$, $\BA$, $\BB$, $\BW$) are used to denote linear operators which are not necessarily compact; in particular, $\BI$ denotes the identity operator. 

\section{Graphs and Graphons}

 Network structures can be modeled as graphs. A graph $G=(V,E)$ is specified by an node set $V$ and an edge set $E \subset V \times V$. It has a representation by the corresponding adjacency matrix $A=[a_{ij}]$ where the element $a_{ij}$ is one when there is an edge from node $i$ to node $j$, and zero otherwise. For a weighted graph, the elements of its adjacency matrix represent the corresponding edge weights. Furthermore, if one embeds the adjacency matrix on $[0,1]^2$ as a pixel picture where each pixel has a side length $\frac1{|V|}$ with $|V|$ representing the cardinality of $V$, then it gives a function $\FAW:[0,1]^2\rightarrow [0,1]$.

 Formally, graphons are defined as bounded {symmetric} Lebesgue measurable functions $\FAW: [0,1]^2 \rightarrow [0,1]$, which can be interpreted as weighted graphs on the node set $[0,1]$. 
A meaningful convergence with respect to the \emph{cut metric} (see e.g. \cite{lovasz2012large}) is defined for sequences of dense and finite graphs.  Graphons are then the limit objects of converging graph sequences.
Moreover, graphons can be used as generative models for exchangeable random graphs \cite{orbanz2015bayesian}. These properties make graphons suitable to model extremely large-scale networks. 
We note that in some papers, for instance \cite{borgs2014lp}, the word ``graphon" refers to symmetric, integrable functions from $[0,1]^2$ to $\BR$.  
In this paper, unless stated otherwise, the term ``graphon'' refers to a bounded symmetric Lebesgue measurable function  ${\FAW_1}: [0,1]^2\rightarrow [-1,1]$, so as to include networks with possibly negative weights, and  $\GSOT$ denotes the set of all graphons.  Let $\GSZT$ represent the set of all graphons satisfying ${\FAW_0}: [0,1]^2\rightarrow [0,1]$.  $\GSZT$ and $\GSOT$ are both compact under the cut metric (after identifying points of zero distance) \cite{lovasz2012large}.

\section{Graphon Operators}

\subsection{Graphon Operators}

Let $L^2(\Omega)$ be the standard Lebesgue space defined on $\Omega$ endowed with  the $2$-norm $\|x\|_2 = ({\int_\Omega x(\alpha)^2 d\alpha})^{\frac{1}{2}}$. $L^2[0,1]$ and $L^2[0,1]^2$ are defined by specializing $\Omega$ to be $[0,1]$ and $[0,1]^2$, respectively.

\begin{definition}
	A linear map $\BT: L^2[0,1]\rightarrow L^2[0,1]$ is said to be \emph{compact} if $\BT$ maps the open unit ball in $L^2[0,1]$ to a set in $L^2[0,1]$ that has compact closure.

\end{definition}
\begin{definition}[\cite{lovasz2012large}]
	A \emph{graphon operator} $\BT_\FA: L^2[0,1] \rightarrow L^2[0,1]$ is defined by a graphon $\FAW \in \GSOT$ as follows:
\begin{equation}\label{equ: graphon operation on functions}
	[\BT_\FA \Ff](x) = \int_0^1 \FAW(x,y) \Ff(y) dy, \quad \Ff(\cdot) \in L^2[0,1].
\end{equation}
\end{definition}
Clearly, the operator $\BT_\FA$ is Hermitian (or self-adjoint), since for any $x,y$ in $L^2[0,1]$, $\langle x, \BT_\FA y \rangle = \langle \BT_\FA x, y\rangle$. Moreover, the graphon operator $\BT_\FA$ is a linear operator that is bounded, (hence) continuous, and compact \cite{conway2013course}.  

For simplicity of notation we henceforth use the bold face letter (e.g. $\FAW$, $\FB$) to represent both a graphon and its corresponding graphon operator. Let $\FAW \Ff$ denote the function defined by \eqref{equ: graphon operation on functions}.

The graphon operator product is then defined by
\begin{equation} \label{equ: graphon operation on graphons}
	[{\FA \FB}](x,y)=\int_0^1\FA(x,z)\FB(z,y)dz,
\end{equation}
where ${\FA,\FB} \in \GSOT.$ 
Let $\FA \FB$ denote the graphon given by the convolution in (\ref{equ: graphon operation on graphons}).
Consequently, the power ${\FAW}^n$ of an operator ${\FAW} \in \GSOT$ is given by
\begin{equation*}
	{\FAW}^n(x,y)=\int_{[0,1]^n}{\FAW}(x,\alpha_1)\cdots {\FAW}(\alpha_{n-1},y)d\alpha_1\cdots d\alpha_{n-1}
\end{equation*}
with ${\FAW}^n \in \GSOT, n\in \BZ$. 
${\FAW^0}$  is formally defined as the identity operator $\BI$ on  functions in $L^2{[0,1]}$, and hence ${\FAW^0}$ is neither a graphon nor a compact operator.
Furthermore, $e^{\FAW}:= \sum_{k=0}^\infty \frac{1}{k!}\FAW^k$ of a graphon operator $\FAW$ is a bounded linear operator from $L^2[0,1]$ to $L^2[0,1]$. 

\subsection{Spectral Representations} 

Denote the operator norm for a linear operator $\BT$ on $L^2[0,1]$ as 
\begin{equation}\label{equ:op-norm-def}
	\|\BT \|_{\text{\textup{op}}}= \sup_{ \Ff \in L^2[0,1], \|\Ff\|_2=1}\|\BT \Ff\|_2.
\end{equation}
Following \cite{rudin1991functional}, we define \emph{kernel space}, \emph{spectrum}, \emph{eigenvalue} and \emph{eigenfunction} below. 
\begin{definition}
	Define the \emph{kernel space} (or \emph{null space})  for a linear operator $\BT$ on $L^2[0,1]$ as :
$$
	\text{Ker}(\BT) := \{x \in L^2[0,1]: \BT x=0 \}.$$
\end{definition}

\begin{definition}
	The \emph{spectrum} $\sigma(\BT)$ of a linear bounded operator $\BT$ on $L^2[0,1]$ is the set of all (complex or real) scalars  $\lambda$ such that $\BT-\lambda \BI$ is not invertible, where $\BI$ is the identity operator.  Thus $\lambda \in \sigma(\BT)$ if and only if at least one of the following two statements is true: 
\begin{enumerate}
	\item[(i)] The range of $\BT -\lambda \BI $ is not all of $L^2[0,1]$, i.e., $\BT -\lambda \BI $ is not onto. 
	\item[(ii)] $\BT -\lambda \BI$ is not one-to-one.
\end{enumerate}
If (ii) holds, $\lambda$ is said to be an \emph{eigenvalue} of $\BT$; the corresponding eigenspace is $\text{Ker}(\BT- \lambda \BI)$; each $x \in \text{Ker}(\BT- \lambda \BI)$ (except $x=0$) is an \emph{eigenfunction} of $\BT$; it satisfies the equation $\BT x= \lambda x$. 
\end{definition}

\begin{lemma}[\cite{ShuangPeterTAC18}] \label{lem:operator-norm-and-L2-norm}
For any graphon $\FAW$ or any function $\FAW$ in $L^2[0,1]^2$, $$\|\FAW\|_{\text{\textup{op}}} \leq \|\FAW\|_2,$$
where  $\|\FA\|_2 =\big(\int_0^1\int_0^1 (\FA(x,y))^2 dx dy\big)^{\frac12}$.
\end{lemma}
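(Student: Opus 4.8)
The plan is to recognize this as the classical estimate bounding the operator norm of a Hilbert--Schmidt integral operator by the $L^2$ norm (Hilbert--Schmidt norm) of its kernel, and to establish it by a direct application of the Cauchy--Schwarz inequality. Since graphons are a special case of functions in $L^2[0,1]^2$ (boundedness implies square-integrability on the finite measure space $[0,1]^2$), it suffices to treat the general $L^2[0,1]^2$ case. By the definition \eqref{equ:op-norm-def} of the operator norm, the goal reduces to showing that $\|\FAW \Ff\|_2 \leq \|\FAW\|_2$ for every $\Ff \in L^2[0,1]$ with $\|\Ff\|_2 = 1$; in fact I would prove the slightly stronger homogeneous bound $\|\FAW \Ff\|_2 \leq \|\FAW\|_2 \, \|\Ff\|_2$ for all $\Ff$, from which the operator-norm statement follows immediately upon specializing to unit vectors and taking the supremum.

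First I would write out the squared $L^2$ norm of the image function explicitly using \eqref{equ: graphon operation on functions}:
$$
\|\FAW \Ff\|_2^2 = \int_0^1 \Big( \int_0^1 \FAW(x,y) \Ff(y)\, dy \Big)^2 dx.
$$
For each fixed $x$, I would view $\FAW(x,\cdot)$ and $\Ff(\cdot)$ as elements of $L^2[0,1]$ and apply the Cauchy--Schwarz inequality to the inner integral, yielding the pointwise bound
$$
\Big( \int_0^1 \FAW(x,y) \Ff(y)\, dy \Big)^2 \leq \Big( \int_0^1 \FAW(x,y)^2\, dy \Big) \Big( \int_0^1 \Ff(y)^2\, dy \Big).
$$
Integrating this inequality in $x$ over $[0,1]$ and recognizing the iterated integral of $\FAW(x,y)^2$ as $\|\FAW\|_2^2$ gives $\|\FAW \Ff\|_2^2 \leq \|\FAW\|_2^2 \, \|\Ff\|_2^2$, and taking square roots completes the estimate.

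The only real point requiring care — and the step I expect to be the main obstacle, albeit a mild one — is the measure-theoretic justification underlying these manipulations. Specifically, since $\FAW \in L^2[0,1]^2$, Tonelli's theorem guarantees that $\int_0^1 \FAW(x,y)^2\, dy < \infty$ for almost every $x$, so that $\FAW(x,\cdot) \in L^2[0,1]$ for a.e. $x$ and the pointwise Cauchy--Schwarz step is legitimate; the same nonnegativity lets me interchange the iterated integral with the double integral defining $\|\FAW\|_2$. For graphons this is entirely automatic because the kernel is bounded, so the inner integrals are finite for every $x$ and no exceptional null set arises. Thus the argument is uniform across both cases in the statement, and the bound $\|\FAW\|_{\text{\textup{op}}} \leq \|\FAW\|_2$ follows.
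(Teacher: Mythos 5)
Your proof is correct: the paper does not prove this lemma inline (it simply cites \cite{ShuangPeterTAC18}), and your argument --- Cauchy--Schwarz applied to the inner integral for a.e.\ fixed $x$, followed by integration in $x$ and a supremum over unit vectors --- is exactly the standard Hilbert--Schmidt bound that underlies the cited result. The Tonelli remark correctly handles the only measure-theoretic subtlety, so nothing is missing.
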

Hence a graphon sequence convergences under $\|\cdot\|_2$ implies it convergences under $\|\cdot\|_{\text{\textup{op}}}$.
Furthermore, the following inequalities hold \cite{janson2010graphons,parise2018graphon}:
\begin{equation}\label{equ:cut-norm-operator-norm}
  \|\FAW\|_\Box \leq \|\FAW\|_{\text{\text{\textup{op}}}} \leq \sqrt{8\|\FAW\|_\Box},
\end{equation}
where the cut norm of a graphon $\FAW \in \GSOT$ is defined as
 \begin{equation}
 \| \FAW \|_{\Box}=\sup_{M,T\subset [0,1]}|\int_{M\times T}\FAW(x,y)dxdy|.
 \end{equation}
The \emph{cut metric} between two graphons $\mathbf{V}$ and $\mathbf{W}$ is then given by
\begin{equation}
	\delta_{\Box}(\mathbf{W, V})=\inf_{\phi\in S_{[0,1]}}\|\mathbf{W}^{\phi} -\mathbf{V}\|_{\Box},
\end{equation}
where $\mathbf{W}^{\phi}(x,y)=\mathbf{W}(\phi(x),\phi(y))$ and $S_{[0,1]}$ denotes the set of measure preserving bijections from $[0,1]$ to $[0,1]$. See \cite{lovasz2012large,janson2010graphons} for more details on different norms.

\begin{proposition}
Consider the graphon operator $\FAW$ corresponding to a graphon $\FAW \in \GSOT $.
Then there is a set $\{\Ff_\ell\}$  consisting of  a countable number of orthonormal elements  in $L^2[0,1]$ such that
	 the elements $\Ff_\ell$ are eigenfunctions to the eigenvalues $\lambda_\ell\in \BR$ ordered as follows:
	$$\|\FAW\|_{\text{\textup{op}}} = |\lambda_1| \geq  |\lambda_2| \geq |\lambda_3| \geq ... \geq 0, $$
	If the set $\{\Ff_\ell\}$ is infinite, we have the asymptotic behavior
	$\lim_{\ell\rightarrow \infty} \lambda_\ell = 0. $
	 Furthermore, for any  $\Fu \in L^2[0,1]$, $\FAW \Fu$ has representations as:
	$$\FAW \Fu = \sum_{\ell=1}^{\infty}\lambda_\ell \langle \Ff_\ell, \Fu\rangle  \Ff_\ell  \text{ and }  \langle \Fu, \FAW \Fu\rangle = \sum_{\ell=1}^{\infty}\lambda_\ell|\langle \Ff_\ell, \Fu\rangle|^2.$$
\end{proposition}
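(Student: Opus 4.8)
The plan is to recognize this proposition as the Spectral Theorem for compact self-adjoint operators, specialized to the graphon operator $\FAW$ acting on the separable Hilbert space $L^2[0,1]$, and then to supply the verification of its hypotheses together with the identification of the norm, the accumulation behaviour, and the two expansions. The two structural facts needed have already been recorded in the excerpt: $\FAW$ is Hermitian (self-adjoint), since $\langle x, \FAW y\rangle = \langle \FAW x, y\rangle$ for all $x,y \in L^2[0,1]$, and $\FAW$ is compact as a Hilbert--Schmidt integral operator. With these in hand I would invoke the Spectral Theorem (e.g.\ \cite{sauvigny2012partial,rudin1991functional}) to produce a countable orthonormal family $\{\Ff_\ell\}$ of eigenfunctions whose corresponding eigenvalues, together with possibly $0$, exhaust the spectrum.

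First I would confirm that the eigenvalues are real. If $\FAW \Ff_\ell = \lambda_\ell \Ff_\ell$ with $\Ff_\ell \neq 0$, then self-adjointness gives $\lambda_\ell \|\Ff_\ell\|_2^2 = \langle \FAW \Ff_\ell, \Ff_\ell\rangle = \langle \Ff_\ell, \FAW \Ff_\ell\rangle = \overline{\lambda_\ell}\,\|\Ff_\ell\|_2^2$, so $\lambda_\ell = \overline{\lambda_\ell} \in \BR$; this permits the ordering by decreasing modulus. Next I would establish $\|\FAW\|_{\text{\textup{op}}} = |\lambda_1|$. The key input is the standard variational identity for bounded self-adjoint operators, $\|\FAW\|_{\text{\textup{op}}} = \sup_{\|\Ff\|_2 = 1}|\langle \Ff, \FAW \Ff\rangle|$; compactness then guarantees this supremum is attained at a unit vector, which is an eigenfunction for an extremal eigenvalue, whence the largest eigenvalue modulus equals the operator norm.

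For the asymptotic statement I would use compactness directly: the nonzero eigenvalues of a compact operator can accumulate only at $0$, so an infinite sequence ordered by decreasing modulus must satisfy $\lim_{\ell\rightarrow\infty}\lambda_\ell = 0$. Finally, for the expansions I would decompose an arbitrary $\Fu \in L^2[0,1]$ as $\Fu = \Fu_0 + \sum_\ell \langle \Ff_\ell, \Fu\rangle \Ff_\ell$, where $\Fu_0 \in \text{Ker}(\FAW)$, using that the eigenfunctions for nonzero eigenvalues span the orthogonal complement of the kernel. Applying $\FAW$ term by term, exploiting continuity to exchange $\FAW$ with the sum and $\FAW \Fu_0 = 0$, yields $\FAW \Fu = \sum_\ell \lambda_\ell \langle \Ff_\ell, \Fu\rangle \Ff_\ell$; pairing this with $\Fu$ and invoking orthonormality then gives $\langle \Fu, \FAW \Fu\rangle = \sum_\ell \lambda_\ell |\langle \Ff_\ell, \Fu\rangle|^2$.

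The main obstacle I anticipate is not any single computation but the clean identification $\|\FAW\|_{\text{\textup{op}}} = |\lambda_1|$: proving that the operator norm is genuinely attained as a largest-modulus eigenvalue, rather than merely bounded by the spectral radius, requires the variational characterization of self-adjoint operators together with compactness to secure the existence of a maximizer. This is the one place where both hypotheses are essential and must be combined carefully, whereas the remaining steps are routine consequences of orthonormal expansion and the continuity of $\FAW$.
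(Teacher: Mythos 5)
Your proposal is correct and matches the paper's approach: the paper proves this proposition simply by citing the Spectral Theorem for compact Hermitian operators (Theorem 7.3 of Sauvigny) and specializing it to graphon operators, exactly the route you take. The additional details you supply --- self-adjointness and compactness of $\FAW$, reality of eigenvalues, the variational identification $\|\FAW\|_{\text{\textup{op}}} = |\lambda_1|$, accumulation of eigenvalues at zero, and the kernel-complement expansion --- are the standard ingredients inside that theorem, so your write-up is a correct elaboration of the same argument rather than a different one.
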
	

This result is a special case of \cite[Theorem 7.3]{sauvigny2012partial}, where the compact Hermitian operators there are specialized to graphon operators. 

\subsection{Convergence of Eigenvalues}
For a graphon $\FAW$, the eigenvalues form two sequences $\mu_1(\FAW)\geq \mu_2(\FAW)\geq ...\geq 0$ and $\mu^{\prime}_1(\FAW) \leq \mu^{\prime}_2(\FAW)\leq ... \leq 0 $ converging to zero, where $\mu_i{(\FAW)}$ and $\mu_i^\prime{(\FAW)}$ denote respectively the $i^{th}$ non-negative eigenvalue and the $i^{th}$ non-positive eigenvalue.
\begin{theorem}[\cite{borgs2012convergent}]
	 Let $\{\FAW_i\}_{i=1}^{\infty}$ be a sequence of uniformly bounded graphons, converging in the cut metric to a graphon $\FAW$. Then for every $i\geq 1$,
	$$\mu_i(\FAW_n) \rightarrow \mu_i(\FAW )\quad \text{and} \quad {\mu}^{\prime}_i(\FAW_n) \rightarrow {\mu}^{\prime}_i(\FAW) \quad \text{ as } n \rightarrow \infty.$$
\end{theorem}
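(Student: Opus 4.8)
The plan is to move the convergence out of the cut metric and into the operator norm, where eigenvalues of compact self-adjoint operators are stable under perturbation, and then to read off eigenvalue convergence from the min-max (Courant--Fischer) characterization. Three ingredients are needed: (i) invariance of the eigenvalues under measure-preserving relabelings, which replaces the cut \emph{metric} $\delta_\Box$ by the cut \emph{norm} $\|\cdot\|_\Box$; (ii) the norm comparison from \eqref{equ:cut-norm-operator-norm}, which upgrades cut-norm convergence to operator-norm convergence; and (iii) a Weyl-type bound $|\mu_i(\FAW_n)-\mu_i(\FAW)|\le \|\FAW_n-\FAW\|_{\text{\textup{op}}}$.

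First I would record the relabeling invariance. For $\phi\in S_{[0,1]}$ the composition operator $U_\phi \Ff = \Ff\circ\phi$ is unitary on $L^2[0,1]$, and a change of variables gives $\FAW^\phi = U_\phi\,\FAW\,U_\phi^{-1}$. Unitarily equivalent operators share the same spectrum, so $\mu_i(\FAW^\phi)=\mu_i(\FAW)$ and $\mu_i'(\FAW^\phi)=\mu_i'(\FAW)$ for every $i$. Since $\delta_\Box(\FAW_n,\FAW)\to 0$, for each $n$ I may pick $\phi_n$ with $\|\FAW_n^{\phi_n}-\FAW\|_\Box\to 0$; replacing $\FAW_n$ by $\FAW_n^{\phi_n}$ leaves all eigenvalues unchanged, so without loss of generality $\|\FAW_n-\FAW\|_\Box\to 0$.

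Next, the difference $\FAW_n-\FAW$ is symmetric and, by the uniform boundedness hypothesis, uniformly bounded, so the upper estimate in \eqref{equ:cut-norm-operator-norm} (with a constant absorbing the uniform bound) yields $\|\FAW_n-\FAW\|_{\text{\textup{op}}}\to 0$. Since each graphon operator is compact and self-adjoint, the Proposition above supplies the spectral decomposition, and the non-negative eigenvalues satisfy $\mu_i(\FAW)=\max_{\dim V=i}\ \min_{\Ff\in V,\,\|\Ff\|_2=1}\langle \Ff,\FAW\Ff\rangle$. Writing $\langle \Ff,\FAW_n\Ff\rangle = \langle \Ff,\FAW\Ff\rangle + \langle \Ff,(\FAW_n-\FAW)\Ff\rangle$ and bounding the last term by $\|\FAW_n-\FAW\|_{\text{\textup{op}}}$ uniformly in $\Ff$, then taking the min over unit vectors of $V$ and the max over $i$-dimensional $V$, gives $|\mu_i(\FAW_n)-\mu_i(\FAW)|\le \|\FAW_n-\FAW\|_{\text{\textup{op}}}$. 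Applying the identical argument to $-\FAW_n$ and $-\FAW$ produces the companion bound for $\mu_i'$. Both then converge as $n\to\infty$.

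The step I expect to demand the most care is making the min-max characterization rigorous in this infinite-dimensional, sign-indefinite setting: one must confirm that the Courant--Fischer formula returns precisely the $i$-th \emph{non-negative} eigenvalue $\mu_i$ (and, after negation, the $i$-th non-positive eigenvalue $\mu_i'$), correctly handling the typically infinite-dimensional kernel, where the formula saturates at $0$ once the positive spectrum is exhausted. Granting the spectral decomposition, this is standard bookkeeping rather than a genuine obstacle; the conceptual heart of the argument is instead the first reduction, which is exactly where the graphon-specific structure---invariance under the relabeling group $S_{[0,1]}$---is used.
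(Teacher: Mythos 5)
The paper itself contains no proof of this theorem: it is quoted with a citation to \cite{borgs2012convergent}, so the only meaningful comparison is with the proof in that reference. Your argument, judged on its own, is correct. The relabeling step is sound: for $\phi \in S_{[0,1]}$ the composition map $U_\phi \Ff = \Ff \circ \phi$ is unitary on $L^2[0,1]$ and $\FAW^\phi = U_\phi \FAW U_\phi^{-1}$, so spectra are invariant, and choosing near-optimal $\phi_n$ converts $\delta_\Box$-convergence into $\|\cdot\|_\Box$-convergence without moving any eigenvalue. Your rescaling remark is exactly the point needed to apply \eqref{equ:cut-norm-operator-norm} to the difference $\FAW_n-\FAW$, whose values lie in $[-2C,2C]$ rather than $[-1,1]$; since the upper bound there is not homogeneous, one gets $\|\FAW_n-\FAW\|_{\text{\textup{op}}} \leq \bigl(16 C \|\FAW_n-\FAW\|_\Box\bigr)^{1/2} \to 0$. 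And the sup-inf (Courant--Fischer) formula does return the zero-padded enumeration $\mu_i$ of a compact self-adjoint operator: if fewer than $i$ positive eigenvalues exist, every $i$-dimensional subspace meets the orthogonal complement of the positive eigenspace, so the sup-inf is at most $0$, while subspaces built from the positive eigenfunctions together with kernel vectors (or with eigenfunctions of negative eigenvalues accumulating at $0$) push it up to exactly $0$. Hence the Weyl bound $|\mu_i(\FAW_n)-\mu_i(\FAW)| \leq \|\FAW_n-\FAW\|_{\text{\textup{op}}}$ holds uniformly in $i$, and the statement for $\mu_i'$ follows by applying it to $-\FAW_n$, $-\FAW$.

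This is, however, a genuinely different route from the proof in \cite{borgs2012convergent}, which is combinatorial rather than operator-theoretic: there, one uses the counting lemma to show that the cycle homomorphism densities converge, $t(C_k,\FAW_n)\to t(C_k,\FAW)$ for every $k\geq 2$ (these densities are continuous in $\delta_\Box$ and invariant under relabeling, which is where your unitary-equivalence ingredient appears implicitly), combines this with the trace identity $t(C_k,\FAW)=\sum_j \lambda_j(\FAW)^k$, and then argues that convergence of all these power sums, together with the uniform Hilbert--Schmidt bound on the spectra, forces convergence of each ordered eigenvalue (via a subsequence-extraction step). Your route is shorter and quantitatively stronger: it yields the explicit modulus $|\mu_i(\FAW_n)-\mu_i(\FAW)| \leq c\, \delta_\Box(\FAW_n,\FAW)^{1/2}$, uniform in $i$, but it leans on the nontrivial norm comparison \eqref{equ:cut-norm-operator-norm}. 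The moment-based proof avoids that inequality entirely, stays within the graph-limit toolkit, and adapts directly to settings where convergence is specified through subgraph densities rather than through the cut metric.
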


This implies that if a sequence of graphs converges in the cut metric \cite{lovasz2012large} to a graphon limit with a simple spectral characterization by a few non-zero eigenvalues, then the sequence of graphs admits simple low-dimensional spectral approximations. Furthermore, if the graphs in the sequence are increasing in size, then the low-dimensional approximations perform better as the networks increase in size.  
This can be illustrated by the sequence of random graphs generated by the Erd\"os-R\'enyi model in Fig. \ref{fig:ER}. For general random graphs generated by dense low-rank models, reasonable low-rank approximations exist \cite{chung2011spectra}.
\begin{figure}[htb] 
\centering
	\includegraphics[width=8.5cm]{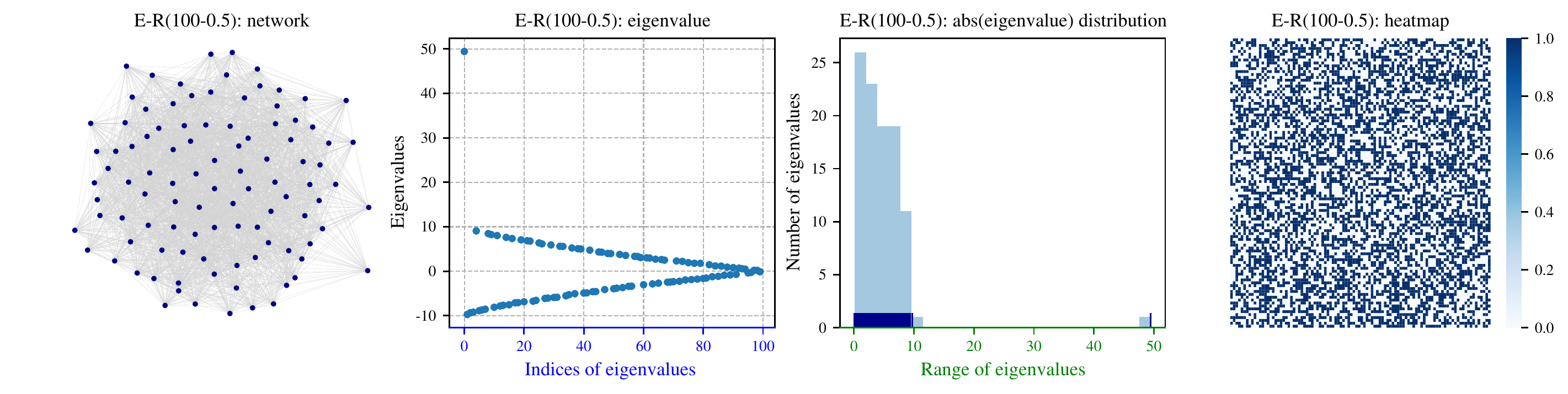}
	\caption{The eigenvalue distribution of a graph with 100 nodes in a convergent sequence of random graphs to the graphon limit $W(x,y)=0.5$. Notice that the eigenvalues accumulate at $0$ and $0.5\times100$ which correspond to the eigenvalues of the graphon $W(x,y)=0.5$. This feature makes it possible to approximate certain large random networks by low dimensional spectral characterizations. }
	\label{fig:ER}
\end{figure}

\section{Approximation of Graphons}

If the eigenvalues and the corresponding eigenfunctions of a graphon are known, 
one can approximate the graphon by a finite spectral sum.  
Consider the approximation of a graphon $\FAW$ by $\FAW_m(x,y)= \sum_{\ell=1}^m \lambda_\ell \Ff_\ell(x)\Ff_\ell(y)$.
Then the mean square error is given as follows:
\begin{equation}\label{equ:finite-spectrum-approximation-of-graphons}
	\begin{aligned}
		\|\FAW-\FAW_m\|_2 
		 = \sqrt{\|\FAW\|_2^2 - \sum_{\ell=1}^m \lambda_\ell^2 }.
	\end{aligned}
\end{equation}

Denote $p_\ell(\cdot)$ as a polynomial function and $p_\ell(e^{2\pi i \cdot})$ is used to approximate the $\ell^{th}$ eigenfunction $\Ff_\ell(\cdot)$ of $\FAW$ with $\FAW(x,y)= \sum_{\ell=1}^{\infty} \lambda_\ell \Ff_\ell(x) \Ff_\ell(y)$. Note that if the polynomial $p_\ell(\cdot)$ permits terms up to infinite order,  $p_\ell(e^{2\pi i \cdot})$ is simply the Fourier representation of $\Ff_\ell(\cdot)$. Denote the spectral sum with Fourier approximated eigenfunctions as
\begin{equation}
	\FAW_{pm}(\vartheta, \psi)= \sum_{\ell=1}^m \lambda_\ell p_\ell(e^{2\pi i  \vartheta}) p_\ell(e^{2\pi i \psi})
\end{equation}

There are two levels of approximations: (a) Fourier approximation of eigenfunctions; (b) spectral decomposition approximation of the graphon operator.
Hence the approximation error is bounded as follows:
$$
	\|\FAW-\FAW_{pm}\|_2 \leq \|\FAW -\FAW_{ m}\|_2 + \|\FAW_{ m} -\FAW_{pm}\|_2. 
$$
Moreover, the approximation error for functions of operators is given in the following. 
\begin{proposition}[\cite{shuangPhDthesis2018}]
	If there exists $c>0$ such that $\|\FAW\|_2 \leq c$ and $\|\FAW_{pm}\|_2 \leq c$, then 
	\begin{align}
		\| \FAW^n - (\FAW_{pm})^n \|_2 \leq n c^n  \|\FAW - \FAW_{pm}\|_2, \\ 
		 \|e^{\FAW}- e^{\FAW_{pm}}\|_{\text{\textup{op}}} \leq  c e^c \|\FAW-\FAW_{pm}\|_{2}.\label{equ:exponential-error}
	\end{align}
	
\end{proposition}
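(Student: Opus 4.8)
The plan is to prove the power estimate first and then derive the exponential estimate from it by expanding the series for $e^{(\cdot)}$. For the power estimate I would begin from the purely algebraic telescoping identity, valid for any pair of operators,
\begin{equation*}
	\FAW^{n}-(\FAW_{pm})^{n}=\sum_{k=0}^{n-1}\FAW^{\,n-1-k}\,(\FAW-\FAW_{pm})\,(\FAW_{pm})^{k},
\end{equation*}
obtained by inserting and cancelling the intermediate products $\FAW^{\,n-1-k}(\FAW_{pm})^{k+1}$. Taking $\|\cdot\|_2$ and the triangle inequality reduces the problem to estimating each of the $n$ summands, in each of which exactly one factor is the ``small'' difference $\FAW-\FAW_{pm}$.

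The analytic ingredient needed is the mixed submultiplicativity of the Hilbert--Schmidt norm $\|\cdot\|_2$ against the operator norm: for any Hilbert--Schmidt operator $S$ and bounded operator $T$ one has $\|TS\|_2\leq\|T\|_{\text{\textup{op}}}\|S\|_2$ and $\|ST\|_2\leq\|S\|_2\|T\|_{\text{\textup{op}}}$, which follow from $\|TS\|_2^2=\sum_i\|TSe_i\|_2^2\leq\|T\|_{\text{\textup{op}}}^2\sum_i\|Se_i\|_2^2$ for an orthonormal basis $\{e_i\}$ (and its adjoint form). Applying this to the $k$-th summand, keeping $\FAW-\FAW_{pm}$ in the $2$-norm and all power factors in the operator norm, and then using Lemma~\ref{lem:operator-norm-and-L2-norm} together with the hypotheses $\|\FAW\|_2\leq c$ and $\|\FAW_{pm}\|_2\leq c$, bounds each summand by
\begin{equation*}
	\|\FAW\|_{\text{\textup{op}}}^{\,n-1-k}\,\|\FAW-\FAW_{pm}\|_2\,\|\FAW_{pm}\|_{\text{\textup{op}}}^{\,k}\leq c^{\,n-1}\,\|\FAW-\FAW_{pm}\|_2.
\end{equation*}
Summing the $n$ identical bounds gives $\|\FAW^{n}-(\FAW_{pm})^{n}\|_2\leq nc^{\,n-1}\|\FAW-\FAW_{pm}\|_2$; since the constant $c$ may always be taken with $c\geq 1$ (any larger value still bounds both norms), this yields the asserted factor $nc^{\,n}$.

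For the exponential estimate I would expand $e^{\FAW}-e^{\FAW_{pm}}=\sum_{n=0}^{\infty}\frac{1}{n!}\big(\FAW^{n}-(\FAW_{pm})^{n}\big)$, where the $n=0$ term vanishes because $\FAW^{0}=(\FAW_{pm})^{0}=\BI$. Passing to $\|\cdot\|_{\text{\textup{op}}}$, applying the triangle inequality to the series, using $\|\cdot\|_{\text{\textup{op}}}\leq\|\cdot\|_2$ from Lemma~\ref{lem:operator-norm-and-L2-norm} termwise, and then the power estimate just proved, gives
\begin{equation*}
	\|e^{\FAW}-e^{\FAW_{pm}}\|_{\text{\textup{op}}}\leq\sum_{n=1}^{\infty}\frac{1}{n!}\,\|\FAW^{n}-(\FAW_{pm})^{n}\|_2\leq\|\FAW-\FAW_{pm}\|_2\sum_{n=1}^{\infty}\frac{nc^{\,n}}{n!}.
\end{equation*}
The remaining scalar series simplifies as $\sum_{n\geq1}\frac{c^{\,n}}{(n-1)!}=c\sum_{m\geq0}\frac{c^{\,m}}{m!}=c\,e^{c}$, which is precisely \eqref{equ:exponential-error}.

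The routine parts are the telescoping identity and the scalar series bookkeeping. The two points requiring genuine care are, first, the mixed Hilbert--Schmidt/operator-norm submultiplicativity, whose role is to spend only the operator norm on the $n-1$ power factors while measuring the difference in the stronger $2$-norm---without this split one would lose control and obtain no useful constant; and second, the legitimacy of taking $\|\cdot\|_{\text{\textup{op}}}$ inside the infinite sum, which is justified because the series converges absolutely in operator norm (the bound just derived shows the norms are summable), so the triangle inequality extends to the infinite series. I expect the only real subtlety to be verifying that each product in the telescoping sum is a genuine Hilbert--Schmidt operator so that $\|\cdot\|_2$ is finite; this holds since products of graphon operators are again graphon operators and $\FAW-\FAW_{pm}$ is a symmetric $L^2$ kernel.
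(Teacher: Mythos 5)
Your proof is correct, and it is essentially the canonical argument for this result: the paper itself supplies no proof (the proposition is cited to the thesis \cite{shuangPhDthesis2018}), but the telescoping identity combined with the mixed Hilbert--Schmidt/operator-norm submultiplicativity $\|TS\|_2\leq\|T\|_{\text{\textup{op}}}\|S\|_2$ and Lemma~\ref{lem:operator-norm-and-L2-norm} is exactly the route such perturbation bounds take, and your handling of the infinite series for the exponential is sound. One point worth keeping explicit: your derivation actually yields the sharper constants $nc^{n-1}$ and $e^{c}$, and the stated factors $nc^{n}$ and $ce^{c}$ follow only after normalizing $c\geq 1$; this normalization is not cosmetic, since for $c<1$ the proposition's first inequality already fails at $n=1$ (the left side is $\|\FAW-\FAW_{pm}\|_2$ while the right side is $c\,\|\FAW-\FAW_{pm}\|_2$), so the reading you adopt --- that $c$ may be enlarged to satisfy $c\geq 1$ --- is the only one under which the statement as printed is true.
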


Consider a polynomial $p_\ell(e^{2\pi i \cdot})$ with highest order $2n$, then there exists a matrix $T_\ell\in \BR^{n\times n}$ such that

$$
	p_\ell(e^{2\pi i \cdot}) = (e_1, e_2,... e_n) T_\ell (e_1, e_2,... e_n)^T
$$
with $e_k= e^{2\pi k i  \cdot }$.
Since Fourier basis forms a complete basis for $L^2[0,1]$ (see e.g. \cite{saxe2013beginning}), any function $\Ff \in L^2[0,1]$ can be approximated by finite Fourier series and hence by polynomial functions of $e^{2\pi i \cdot}$. 

\section{Graphons with Simple Spectral Representations}

\subsection{Sinusoidal Graphons}
A \emph{sinusoidal graphon} is defined as any graphon that can be represented by 
\begin{equation}\label{equ:sinusoidal-graphon-def}
	\FAW(\varphi,\vartheta) = a_0 + \sum_{k=1}^{\infty}b_k \cos(2\pi k(\varphi-\vartheta)), \quad (\varphi,\vartheta) \in [0,1]^2.
\end{equation}
Clearly sinusoidal graphons are symmetric and diagonally constant and hence they are suitable to fit Toeplitz matrices \cite{gray2006toeplitz}. 

Sinusoidal graphons have simple spectral characterizations. 
The eigenfunctions of the graphon in (\ref{equ:sinusoidal-graphon-def}) are  $L^2[0,1]$ functions as follows:
\begin{equation*}
	1, \{\sqrt{2}\cos2\pi k (\cdot):  k \in \BZ \}, \{\sqrt{2}\sin2\pi k(\cdot): k \in \BZ \}.
\end{equation*}
The corresponding eigenvalues are:
	$a_0, \{\frac{b_k}2: k \in \BZ\}, \{\frac{b_k}2: k \in \BZ \}.$
Moreover, the eigenfunctions  
form a complete orthonormal basis for $L^2[0,1]$ (see \cite{saxe2013beginning}).

\subsubsection{Operation on $L^2[0,1]$ functions}
Consider a function $ \Fx \in L^2[0,1]$ represented by Fourier series as  
\begin{equation*}
	\Fx({\vartheta}) = K_0 + \sum_{k=1}^{\infty} (\alpha_k \sin(\vartheta)+ \beta_k \cos(\vartheta)), \quad \vartheta \in[0,1].
\end{equation*}
Then operating a sinusoidal graphon $\FAW$ on $\Fx$ yields
\begin{equation*}
	[\FAW \Fx](\vartheta) = a_0K_0+ \sum_{k=1}^{\infty} \frac{b_k}{2}(\alpha_k \sin(\vartheta)+ \beta_k\cos(\vartheta)), ~ \vartheta \in[0,1].
\end{equation*}
Similarly, for $m\in \BZ$ and $\vartheta \in[0,1],$
\begin{equation*}
	\begin{aligned}
		& [\FAW^m \Fx](\vartheta) = a_0^m K_0+ \sum_{k=1}^{\infty} \left(\frac{b_k}{2}\right)^m(\alpha_k \sin(\vartheta)+ \beta_k\cos(\vartheta)),\\
		& [e^{\FAW} \Fx](\vartheta) = e^{a_0}K_0+ \sum_{k=1}^{\infty} e^{\frac{b_k}{2}}(\alpha_k \sin(\vartheta)+ \beta_k\cos(\vartheta)).
	\end{aligned}
\end{equation*}

\subsubsection{Functions of sinusoidal graphons}
Power functions of a sinusoidal graphon $\FAW$ are given by
\begin{equation*}
	\begin{aligned}
		&\FAW^m(\varphi,\vartheta) = a_0^m + \sum_{k=1}^{\infty} (\frac{b_k}{2})^{m-1} \cdot b_k \cos(2\pi k(\varphi-\vartheta)), 
	\end{aligned}
\end{equation*}
where $(\varphi,\vartheta) \in [0,1]^2,  m\in \BZ$.
Moreover, the time-varying exponential function of a sinusoidal graphon is given by $e^{\FAW t} = \BI + {\FU}_t, ~ t\in \BR,$
where $\FU_t$ is defined as follows: for all $(\varphi,\vartheta) \in [0,1]^2$ 
\begin{equation*}
\begin{split}
		&\FU_t(\varphi, \vartheta) = \left(e^{a_0 t} -1\right) + 2\sum_{k=1}^{\infty} \left(e^{\frac{b_k }{2}t}-1\right)\cos 2\pi k(\varphi-\vartheta). \\
\end{split}
\end{equation*}
We note that, for any $t \in \BR,$  $e^{\FAW t}$ is an element of the graphon unitary operator algebra \cite{ShuangPeterCDC18}. %

\subsection{Step Function Graphons}
A graphon ${\FAW} \in \GSOT$ is a {\it step function} if there is a partition $Q=\{Q_1,...,Q_N\}$ of $[0, 1]$ into measurable sets such that ${\FAW}$ is constant on every product set $Q_i \times Q_j$. %
 A \emph{uniform partition} $P^{N}=\{P_1, P_2, ..., P_{N}\}$  of $[0,1]$ is given  by setting $P_k=[\frac{k-1}{N}, \frac{k}{N}), k\in\{1, N-1\}$ and $P_N=[\frac{N-1}N,1]$. In this paper, we focus on step functions with uniform partitions. For general step functions,  see \cite{lovasz2012large}.

The \emph{piece-wise constant function} $\mathds{S}_u$ in $L^2[0,1]$ corresponding to $u \in \BR^N$ is defined as 
\begin{equation}\label{equ:piece-wise-constant}
	\mathds{S}_u(x) =\sum_{i=1}^N {\Chi}_{_{P_i}}(x) u_i , \quad \forall x \in [0,1]
\end{equation}
where $\Chi_{_{P_i}}(\cdot)$ denotes the indicator function, that is, $\Chi_{_{P_i}}(x)=1$ if $x\in P_i$ and $\Chi_{_{P_i}}(x)=0$ if $x\notin P_i$.
Let $\mathds{S}_u\cdot \mathds{S}_v^T$ be given as follows:  
\begin{equation}
	\begin{aligned}
		[\mathds{S}_u\cdot \mathds{S}_v^T](x,y) &: =  \sum_{i=1}^N {\Chi}_{_{P_i}}(x) u_i \sum_{i=1}^N {\Chi}_{_{P_j}}(y) v_j\\
	\end{aligned}
\end{equation}
where $\mathds{S}_u$ and $ \mathds{S}_v^T$ share the same uniform partition $P^N=\{P_1,..., P_N \} $.

\begin{proposition}\label{prop:spectral-decomposition}
If the matrix $A=[a_{ij}]$ has a spectral decomposition $A=V\Lambda_d V^
\TRANS$, where $\Lambda_d=\text{diag}(\lambda_1,...,\lambda_d)$ and $V=(v_1,....,v_d)$ with $v_\ell$ representing the normalized eigenvector of $\lambda_\ell$, 
then the step function graphon  $\FAW$ defined by 
\begin{equation}\label{equ:step-function}
	\FAW(x,y) = \sum_{i=1}^{N} \sum_{j=1}^{N} \Chi_{_{P_i}}(x)\Chi_{_{P_j}}(y)a_{ij},  ~~ (x,y) \in [0,1]^2
\end{equation}
has a spectral representation given by
\begin{equation}\label{equ:step-function-spectra}
	\begin{aligned}
		\FAW(x,y) 	
				& = \sum_{\ell =1}^d \lambda_{\ell} [\mathds{S}_{v_\ell}\cdot \mathds{S}_{v_\ell}^\TRANS](x,y), ~~ (x,y)\in [0,1]^2,
	\end{aligned}
\end{equation}
where the underlying partition $\{P_1, ..., P_N\}$ of $[0,1]$ is uniform. 
\end{proposition}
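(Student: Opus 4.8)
The plan is to reduce the claimed identity to the entrywise form of the spectral decomposition of $A$ and then to recognize the piecewise constant functions $\mathds{S}_{v_\ell}$ inside the resulting double sum. Writing $(v_\ell)_i$ for the $i$-th component of the normalized eigenvector $v_\ell$, the decomposition $A = V\Lambda_d V^\TRANS = \sum_{\ell=1}^d \lambda_\ell v_\ell v_\ell^\TRANS$ reads, entry by entry, as $a_{ij} = \sum_{\ell=1}^d \lambda_\ell (v_\ell)_i (v_\ell)_j$ for all $i,j \in \{1,\dots,N\}$. This entrywise identity is the only structural input I need.

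First I would substitute this expression for $a_{ij}$ into the definition \eqref{equ:step-function} of $\FAW$. Since all the sums involved are finite, I may freely interchange the summation over $\ell$ with the double summation over $i$ and $j$, obtaining
$$\FAW(x,y) = \sum_{\ell=1}^d \lambda_\ell \Big( \sum_{i=1}^N \Chi_{_{P_i}}(x) (v_\ell)_i \Big) \Big( \sum_{j=1}^N \Chi_{_{P_j}}(y) (v_\ell)_j \Big).$$
By the defining formula \eqref{equ:piece-wise-constant} for $\mathds{S}_u$, the two bracketed factors are exactly $\mathds{S}_{v_\ell}(x)$ and $\mathds{S}_{v_\ell}(y)$, whose product is by definition $[\mathds{S}_{v_\ell} \cdot \mathds{S}_{v_\ell}^\TRANS](x,y)$. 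This yields \eqref{equ:step-function-spectra} directly.

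To justify calling \eqref{equ:step-function-spectra} a genuine \emph{spectral representation}, I would then verify that the functions $\mathds{S}_{v_\ell}$ are, up to normalization, eigenfunctions of the graphon operator $\FAW$. Applying $\FAW$ to $\mathds{S}_{v_m}$ and using \eqref{equ:step-function-spectra} gives $[\FAW \mathds{S}_{v_m}](x) = \sum_{\ell} \lambda_\ell \mathds{S}_{v_\ell}(x) \langle \mathds{S}_{v_\ell}, \mathds{S}_{v_m}\rangle$. The key computation is the $L^2[0,1]$ inner product $\langle \mathds{S}_{v_\ell}, \mathds{S}_{v_m}\rangle$: because the partition is uniform, each cell $P_i$ has Lebesgue measure $\tfrac1N$ and the indicators have disjoint supports, so this inner product equals $\tfrac1N\sum_i (v_\ell)_i (v_m)_i = \tfrac1N\langle v_\ell, v_m\rangle = \tfrac1N\delta_{\ell m}$ by orthonormality of the eigenvectors. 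Hence $[\FAW \mathds{S}_{v_m}] = \tfrac{\lambda_m}{N}\mathds{S}_{v_m}$, so $\sqrt{N}\,\mathds{S}_{v_m}$ is a unit-norm eigenfunction of $\FAW$ with eigenvalue $\lambda_m/N$.

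The computation itself presents no real difficulty: it is only the reordering of finite sums and a factoring step. The one point requiring care, and the only place where the uniform-partition hypothesis is actually used, is the measure-scaling factor $\tfrac1N$. The eigenvalues of the graphon operator are the matrix eigenvalues $\lambda_\ell$ divided by $N$, and the functions $\mathds{S}_{v_\ell}$ must be rescaled by $\sqrt{N}$ to become orthonormal in $L^2[0,1]$. Tracking this factor correctly is the main thing to get right; for a non-uniform partition the weights $\tfrac1N$ would be replaced by the individual cell measures, modifying both the eigenvalues and the normalization accordingly.
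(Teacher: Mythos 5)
Your proof is correct and follows essentially the same route as the paper: substitute the entrywise decomposition $a_{ij}=\sum_{\ell}\lambda_\ell v_\ell(i)v_\ell(j)$ into \eqref{equ:step-function}, exchange the finite sums, and recognize the factors as $\mathds{S}_{v_\ell}(x)\mathds{S}_{v_\ell}(y)$. Your additional verification that the $\mathds{S}_{v_\ell}$ are (after rescaling by $\sqrt{N}$) orthonormal eigenfunctions with eigenvalues $\lambda_\ell/N$ matches the remark the paper makes immediately after the proposition, so nothing is missing or superfluous.
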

\begin{proof}
	For all $(x,y) \in [0,1]^2$,
	\begin{equation*}
		\begin{aligned}
		\FAW(x,y) 
			& = \sum_{i=1}^N\sum_{j=1}^N\Chi_{_{P_i}}(x)\Chi_{_{P_i}}(y) \sum_{\ell =1}^d \lambda_{\ell} v_l(i)v_l(j) \\
				&  = \sum_{\ell =1}^d \lambda_{\ell} \sum_{i=1}^N\Chi_{_{P_i}}(x)v_\ell(i) \sum_{j=1}^N\Chi_{_{P_j}}(y)v_\ell(j)\\
				& = \sum_{\ell =1}^d \lambda_{\ell} [\mathds{S}_{v_\ell}\cdot \mathds{S}_{v_\ell}^T](x,y),
		\end{aligned}
	\end{equation*}
	where $v_\ell(i)$ denotes the $i^{th}$ element of $v_\ell$.
\end{proof}

We note that 
$\langle \mathds{S}_{v_\ell}, \mathds{S}_{v_k} \rangle  =0, \text{ if } \ell \neq k; ~ \langle \mathds{S}_{v_\ell}, \mathds{S}_{v_k} \rangle =\frac1N, \text{ if } \ell = k$ and hence the corresponding eigenvalues for $\FAW$ are given by $\{\frac1N \lambda_\ell\}_{\ell=1}^{d}$.

Piece-wise constant functions in $L^2[0,1]$ form eigenfunctions of step function graphons. Since piece-wise constant functions in $L^2[0,1]$ form a dense subset of $L^2[0,1]$ space, they can also be used to approximate eigenfunctions of general graphons.

\section{Graphon Dynamical  Systems}

Consider a group of linear dynamical 
 subsystems $\{S_i^N; 1\leq i \leq N \}$ coupled over an undirected graph $G_N$.
 The subsystem $S_i^N$ at the node $i$ of $G_N$ has  interactions with $S_j^N, 1\leq j \leq N,$ specified as below:
 \begin{equation} \label{equ:network-system}
 	\begin{aligned}
 	 &\dot{x}^i_t= \alpha_0 x_t^i+ \frac{1}{N}\sum_{j=1}^{N} {a}_{ij} x^j_t+\beta_0 u_t^i +\frac1{N} \sum_{j=1}^N {b}{_{ij}} u^j_t, \quad \\
 	& t \in [0,T],\quad \alpha_0, \beta_0 \in \BR, \quad  x^i_t, u^i_t \in \BR,
 	\end{aligned}
 \end{equation}  
  with ${A}_{N}= [{a}_{ij}]$ and $ {B}_{N} =[{b}_{ij}] \in \BR^{N\times N}$ as the symmetric adjacency matrices of $G_N$ and  of the input graph. Let $x_t=[x_1, \dots, x_N]^\TRANS$ and $u_t=[u_1, \dots, u_N]^\TRANS$.

Define the graphon step functions $\SA$ and $\SB$ that correspond to $A_N$ and $B_N$ according to \eqref{equ:step-function}, respectively. Next, define the piece-wise constant functions $\Sxt$ and $\Sut$ that correspond respectively to $x_t$ and $u_t$ according to  \eqref{equ:piece-wise-constant}.  Then the corresponding graphon dynamical system is given by 
\begin{equation}
	\begin{aligned} \label{equ:step-function-dynamical-system}
	&\DSxt= (\alpha_0 \BI+\SA) \Sxt+(\beta_0 \BI+\SB) \Sut, \quad t\in[0,T],\\
	&\alpha_0, \beta_0 \in \BR,   \quad \Sxt, \Sut \in L_{pwc}^2[0,1],\quad  \SA, \SB \in \GSOT
	\end{aligned}
\end{equation}
 where $L^2_{pwc}[0,1]$ represents the set of all piece-wise constant functions in $L^2[0,1]$.
 The trajectories of the system in \eqref{equ:network-system} correspond one-to-one to the trajectories of the system \eqref{equ:step-function-dynamical-system}. 

We formulate the infinite dimensional graphon linear system as follows: 
\begin{equation} \label{equ:infinite-system-model}
	 \begin{aligned}
		\mathbf{\dot{x}}_t=& (\alpha_0 \BI +\FA) \Fx_t + (\beta_0 \BI +\FB)\Fu_t, \quad 
		 t \in [0,T],
		\end{aligned}
\end{equation}
where $~ \alpha_0, \beta_0 \in \BR$,  $\FA, \FB \in \GSOT$, and $\Fx_t, \Fu_t \in L^2[0,1]$. ${\Fx_t}$ and ${\Fu_t}$ represent respectively the system state and the control input at time $t$. 
The space of admissible control is taken to be  $L^2 ([0, T]; L^2{[0,1]} )$, that is, the Banach space of equivalence classes
of strongly measurable mappings $\Fx:[0,T] \rightarrow  L^2{[0,1]} $ that are integrable with the norm $\|\Fx\|_{L^2 ([0, T]; L^2{[0,1]})}= (\int_0^T \int_0^1\Fx_{\tau}(\alpha)^2d\alpha d\tau )^{\frac12}$.

Evidently, $\alpha_0 \BI +\FA$ and $\beta_0 \BI +\FB$ are bounded linear operators on $L^2[0,1]$. Therefore  the system model is well defined and has a unique mild solution following \cite{bensoussan2007representation}. 
For simplicity, let $\BA = (\alpha_0\BI+\FA)$ and $\BB = (\beta_0\BI+\FB)$. Hence $\BA$ and $\BB$ lie in the graphon unitary operator algebra \cite{ShuangPeterTAC18}. Denote the graphon dynamical system in \eqref{equ:infinite-system-model} by $(\BA; \BB)$. 

The system in \eqref{equ:infinite-system-model} can represent the limit system for \eqref{equ:step-function-dynamical-system} when the underlying step function graphons convergence in the operator sense or the $L^2[0,1]^2$ sense  \cite{ShuangPeterTAC18}. 

\section{Controllability Analysis based on Spectral Representations}

A graphon $\FA$ is a compact operator, and hence it has a discrete spectrum. Its spectral decomposition is given as follows
  $
  \FA (x,y) = \sum_{\ell \in I_{\lambda}} \lambda_{\ell} \Ff_{\ell}(x) \Ff_{\ell}(y)
  $ for all $(x,y) \in [0,1]^2$.
  where $\Ff_{\ell}$ is the normalized eigenfunction corresponding to the non-zero eigenvalues $\lambda_{\ell}$ and $I_\lambda$ is the index set for non-zero eigenvalues of $\FA$, which contains a countable number of elements \cite{lovasz2012large}. 

  For infinite dimensional systems there are two notions of controllability: approximate controllability and exact controllability \cite{vidyasagar1970controllability}. 
   We only discuss exact controllability in this paper.

 \begin{definition}
		A graphon dynamical system $(\BA; \BB)$ in \eqref{equ:infinite-system-model} is \emph{exactly controllable} in  $L^2[0,1]$ over the time horizon $[0,T]$ if the system state can be driven to the origin at time $T$  from any initial state $\Fx_0\in L^2[0,1]$. 
\end{definition}
We define the \emph{controllability Gramian operator} as 
		$\BW_{T}= \int_0^{T} e^{\BA\tau}\BB\BB^\TRANS e^{\BA^\TRANS\tau}d\tau.$
If there exists $c>0$ such that, for every $\lambda \in \sigma(\BW_T)$, $|\lambda|\geq c$ holds, then the system is exactly controllable \cite{ShuangPeterTAC18} and the minimum control energy  $J =\int_0^T\|\Fu_t\|_2^2 dt $ (see. e.g. \cite{ShuangPeterCDC17}) required to drive the system from state $\Fx_0 \in L^2[0,1]$ to the origin at time $T$ is given by
$
	J(\Fx_0) = \langle e^{\BA T}\Fx_0, \BW_T^{-1}e^{\BA T}\Fx_0 \rangle .
$

For any graphon system $(\BA; \BB)$ with a compact operator $\BB$, exact controllability cannot be achieved over a finite horizon \cite{triggiani1975lack}. If  $\BB$ lies in the graphon unitary operator algebra \cite{ShuangPeterTAC18}, then exact controllability of $(\BA; \BB)$ in \eqref{equ:infinite-system-model} over $[0,T]$ implies $\beta_0 \neq 0$.

In general, it is not obvious how to find the explicit forms of the controllability Gramian operator. However, when $\FA$ and $\FB$ share the same structure, explicit forms are possible. 
\begin{proposition}
  Assume $\FA \in \GSOT$ and  $\FB=\sum_{k=1}^{d}\beta_k\FA^k$.	 Denote $\eta_\ell = \sum_{k=0}^{d}\beta_k\lambda_\ell^k$.
  Then the controllability Gramian operator for the system $(\BA, \BB)$ in \eqref{equ:infinite-system-model} is explicitly given by 
		\begin{equation} \label{equ:explicit-gramian-poly-BABB}
		\begin{aligned}
			\BW_T &= \int_0^T e^{\alpha_0 t}dt\beta_0^2\BI\\ 
			&+ \sum_{\ell \in I_{\lambda}}  \left((\eta_\ell)^2\int_0^Te^{2(\alpha_0+\lambda_\ell) t}dt- \int_0^T e^{\alpha_0 t}dt\beta_0^2 \right)  \Ff_\ell \Ff^\TRANS_\ell ;
		\end{aligned}
		\end{equation}
furthermore, if $\beta_0\neq 0$, then
the inverse of the controllability Gramian operator for $(\BA;\BB)$ in \eqref{equ:infinite-system-model} is explicitly given by
\begin{equation} \label{equ:explicit-inverse-gramian-poly-BABB}
\begin{aligned}
		\BW_T^{-1}&= \frac{1}{\int_0^T e^{\alpha_0 t}dt \beta_0^2}\BI \\
	&-  \frac{1}{\int_0^T e^{\alpha_0 t}dt\beta_0^2} \sum_{\ell \in I_{\lambda}} \frac{(\eta_\ell)^2\int_0^T e^{2\lambda_\ell t}dt-T\beta_0^2}{(\eta_\ell)^2\int_0^T e^{2\lambda_\ell t}dt} \Ff_{\ell} \Ff_\ell^{\TRANS}.
\end{aligned}
\end{equation}
  \end{proposition}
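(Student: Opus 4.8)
The plan is to diagonalize everything at once. Since $\FA$ is a compact self-adjoint (graphon) operator, the spectral theorem for graphon operators recalled above furnishes an orthonormal basis $\{\Ff_\ell\}$ of $L^2[0,1]$ of eigenfunctions, $\FA\Ff_\ell=\lambda_\ell\Ff_\ell$, where the indices $\ell\in I_\lambda$ carry the nonzero eigenvalues and the remaining basis vectors span $\mathrm{Ker}(\FA)$, i.e.\ $\lambda_\ell=0$ there. Because $\BB=\beta_0\BI+\FB=\sum_{k=0}^{d}\beta_k\FA^k$ is a real polynomial in $\FA$, and $\BA=\alpha_0\BI+\FA$, every $\Ff_\ell$ is a common eigenfunction: $\BA\Ff_\ell=(\alpha_0+\lambda_\ell)\Ff_\ell$ and $\BB\Ff_\ell=\eta_\ell\Ff_\ell$ with $\eta_\ell=\sum_{k=0}^{d}\beta_k\lambda_\ell^k$ (so $\eta_\ell=\beta_0$ on the kernel). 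As real polynomials of a self-adjoint operator, $\BA$ and $\BB$ are themselves self-adjoint, whence $\BB^\TRANS=\BB$ and $e^{\BA^\TRANS\tau}=e^{\BA\tau}$.

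The first substantive step is to read off the action of the Gramian integrand on each eigendirection. Since $e^{\BA\tau}\Ff_\ell=e^{(\alpha_0+\lambda_\ell)\tau}\Ff_\ell$, one gets $e^{\BA\tau}\BB\BB^\TRANS e^{\BA^\TRANS\tau}\Ff_\ell=\eta_\ell^2\, e^{2(\alpha_0+\lambda_\ell)\tau}\Ff_\ell$. To conclude that $\BW_T$ is diagonal in $\{\Ff_\ell\}$ with eigenvalue $w_\ell=\eta_\ell^2\int_0^T e^{2(\alpha_0+\lambda_\ell)\tau}\,d\tau$, I would avoid interchanging the time integral with the infinite spectral sum directly; instead I would test the Bochner integral against the bounded linear functional $\langle\Ff_m,\,\cdot\,\Ff_\ell\rangle$, which commutes with the integral, reducing the claim to the scalar identity $\langle\Ff_m,\BW_T\Ff_\ell\rangle=\delta_{m\ell}\,\eta_\ell^2\int_0^T e^{2(\alpha_0+\lambda_\ell)\tau}\,d\tau$. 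The stated form then follows by writing $\BI=\sum_\ell\Ff_\ell\Ff_\ell^\TRANS$, extracting the common kernel value $w_{\mathrm{ker}}=\beta_0^2\int_0^T e^{2\alpha_0\tau}\,d\tau$ as the coefficient of $\BI$, and leaving the residual corrections $w_\ell-w_{\mathrm{ker}}$ on the projections $\Ff_\ell\Ff_\ell^\TRANS$, $\ell\in I_\lambda$.

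For the inverse, diagonality makes $\BW_T^{-1}$ the diagonal operator with entries $1/w_\ell$, after which the displayed formula is recovered by re-expanding in the $\BI$-plus-projections form and rearranging $1/w_\ell-1/w_{\mathrm{ker}}$. The step I expect to be the main obstacle is not this algebra but verifying that $\BW_T^{-1}$ is a \emph{bounded} operator, which needs $\inf_\ell|w_\ell|>0$. This is precisely where the hypothesis $\beta_0\neq 0$ enters: since $\lambda_\ell\to 0$ and hence $\eta_\ell\to\beta_0$ as $\ell\to\infty$, the eigenvalues $w_\ell$ accumulate only at $w_{\mathrm{ker}}=\beta_0^2\int_0^T e^{2\alpha_0\tau}\,d\tau$, which is strictly positive exactly when $\beta_0\neq 0$. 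Thus $0\notin\sigma(\BW_T)$ and the spectrum is bounded away from the origin, so the inverse exists as a bounded operator and the closed form is legitimate; this also makes transparent the earlier remark that exact controllability forces $\beta_0\neq 0$.
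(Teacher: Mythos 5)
Your proposal follows essentially the same route as the paper's own proof: both arguments diagonalize $\BA=\alpha_0\BI+\FA$ and $\BB=\beta_0\BI+\FB$ simultaneously in the eigenbasis of $\FA$ (possible because $\BB$ is a polynomial in $\FA$), compute the action of $e^{\BA t}\BB\BB^\TRANS e^{\BA^\TRANS t}$ separately on $\mathrm{Ker}(\FA)$ and on each eigendirection $\Ff_\ell$, $\ell\in I_\lambda$, and then rewrite the resulting diagonal operator as a multiple of $\BI$ plus rank-one corrections. The paper does this by applying $\BW_T$ to an arbitrary $\Fz$ split as $\breve\Fz+\sum_{\ell\in I_\lambda}\langle\Fz,\Ff_\ell\rangle\Ff_\ell$, and for the inverse it solves $\Fu=\BW_T\Fz$ for $\Fz$ by taking inner products with the $\Ff_\ell$; you compute matrix elements $\langle\Ff_m,\BW_T\Ff_\ell\rangle$ in a completed orthonormal basis and invert entrywise. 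These are the same computation, and your appeal to the commutation of bounded functionals with the Bochner integral is a cleaner justification of the integral/sum interchange than the paper offers. (Your kernel coefficient $\beta_0^2\int_0^T e^{2\alpha_0 t}dt$ agrees with the paper's own derivation in \eqref{equ:opt-inverse-BABB}; the exponent $e^{\alpha_0 t}$ in the proposition's display is a typo, so no points lost there.)

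The one genuine gap is in the invertibility argument, which is precisely where you go beyond the paper (the paper simply cites \cite{ShuangPeterTAC18} for invertibility when $\beta_0\neq 0$). You argue that the Gramian eigenvalues $w_\ell=\eta_\ell^2\int_0^T e^{2(\alpha_0+\lambda_\ell)t}dt$ accumulate only at $w_{\mathrm{ker}}=\beta_0^2\int_0^T e^{2\alpha_0 t}dt>0$ and conclude $0\notin\sigma(\BW_T)$. The accumulation statement controls only the tail: it does not exclude $\eta_\ell=\sum_{k=0}^{d}\beta_k\lambda_\ell^k=0$ at some individual $\ell\in I_\lambda$. For example, with $\FB=\beta_1\FA$ and an eigenvalue $\lambda_\ell=-\beta_0/\beta_1$, one gets $\BB\Ff_\ell=0$, hence $\BW_T\Ff_\ell=0$ and $\BW_T$ is singular even though $\beta_0\neq 0$. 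So the step from ``the $w_\ell$ accumulate only at a positive value'' to ``$\inf_\ell w_\ell>0$'' silently assumes the non-degeneracy condition $\eta_\ell\neq 0$ for all $\ell\in I_\lambda$. To be fair, this hypothesis is implicit in the proposition itself (its inverse formula divides by $\eta_\ell^2$) and is buried in the paper's citation, but since your stated aim was a self-contained proof that $\beta_0\neq 0$ yields a bounded inverse, you should either add this hypothesis explicitly or note that it is exactly what must be verified.
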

\begin{proof}
Consider any $\Fz \in L^2[0,1]$. Let $\breve \Fz = \Fz -\sum_{\ell \in I_{\lambda}} \langle \Fz, \Ff_\ell \rangle \Ff_\ell.$ Then
	\begin{equation} \label{equ:opt-inverse-BABB}
		\begin{aligned}
	&\BW_T \Fz 
	 = \int_0^T e^{\BA t} \BB \BB^\TRANS e^{\BA^\TRANS t}dt \left(\breve \Fz +\sum_{\ell \in I_{\lambda}} \langle \Fz, \Ff_\ell \rangle \Ff_\ell\right)\\
	& = \int_0^T e^{\BA t} \BB \BB^\TRANS e^{\BA^\TRANS t}dt \breve \Fz +\sum_{\ell \in I_{\lambda}} \int_0^T e^{\BA t} \BB \BB^\TRANS e^{\BA^\TRANS t} dt \langle \Fz, \Ff_\ell \rangle \Ff_\ell\\ 
	& = \int_0^T e^{2\alpha_0 t} dt\beta_0^2 \breve \Fz +\sum_{\ell \in I_{\lambda}} \int_0^T e^{2(\alpha_0+\lambda_\ell) t} \eta_\ell^2 dt \langle \Fz, \Ff_\ell \rangle \Ff_\ell\\
	& = \int_0^T e^{2\alpha_0 t} dt \beta_0^2 \Fz \\
	&+\sum_{\ell \in I_{\lambda}} \left(\eta_\ell^2\int_0^T e^{2(\alpha_0+\lambda_\ell) t}  dt - \int_0^T e^{2\alpha_0 t} dt \beta_0^2 \right)\langle \Fz, \Ff_\ell \rangle \Ff_\ell.\\
		\end{aligned}		
	\end{equation}
This yields the equivalent representation in \eqref{equ:explicit-gramian-poly-BABB}.	

If $\beta_0 \neq 0$, $\BW_T$ is invertible \cite{ShuangPeterTAC18}. Suppose $\Fu=\BW_T\Fz$. To find $\BW_T^{-1}$, we need to find the operator that maps $\Fu$ back to $\Fz$. Taking the inner product with $\Ff_\ell$ on both sides of \eqref{equ:opt-inverse-BABB} yields: 
\begin{equation} \label{equ:midle-inverse-BABB}
	\begin{aligned}
		\langle \Fu, \Ff_\ell \rangle 
		& = 	 \sum_{\ell \in I_{\lambda}} \left(\eta_\ell^2 \int_0^T e^{2(\alpha_0+\lambda_\ell) t} dt  \right)\langle \Fz, \Ff_\ell \rangle. 
\end{aligned}
\end{equation}
Then by replacing $\langle \Fz, \Ff_\ell \rangle$ in \eqref{equ:opt-inverse-BABB} based on \eqref{equ:midle-inverse-BABB}, we obtain
\begin{equation}
\begin{aligned}
	\Fz & =\frac{1}{\int_0^T e^{\alpha_0 t}dt \beta_0^2}\Fu \\
	&-  \frac{1}{\int_0^T e^{\alpha_0 t}dt\beta_0^2}\sum_{\ell \in I_{\lambda}} \frac{(\eta_\ell)^2\int_0^T e^{2\lambda_\ell t}dt-T\beta_0^2}{(\eta_\ell)^2\int_0^T e^{2\lambda_\ell t}dt} \langle \Fu ,\Ff_{\ell} \rangle \Ff_\ell^{\TRANS},
\end{aligned}
\end{equation}
and hence 
\eqref{equ:explicit-inverse-gramian-poly-BABB} holds.
\end{proof}

  If $\alpha_0=0$, then the controllability Gramian operator for the system $(\BA, \BB)$ in \eqref{equ:infinite-system-model} is given by 
		\begin{equation} \label{equ:explicit-gramian-poly}
			\BW_T = T\beta_0^2\BI + \sum_{\ell \in I_{\lambda}}  \left((\eta_\ell)^2\int_0^Te^{2\lambda_\ell t}dt- T \beta_0^2 \right)  \Ff_\ell \Ff^\TRANS_\ell ;
		\end{equation}
furthermore, if $\beta_0\neq 0$, then
the inverse of the controllability Gramian operator for $(\BA;\BB)$ in \eqref{equ:infinite-system-model} is explicitly given by
\begin{equation*} 
	\BW_T^{-1}= \frac1{T\beta_0^2} \BI - \frac1{T\beta_0^2} \sum_{\ell \in I_{\lambda}} \frac{(\eta_\ell)^2\int_0^T e^{2\lambda_\ell t}dt-T\beta_0^2}{(\eta_\ell)^2\int_0^T e^{2\lambda_\ell t}dt} \Ff_{\ell} \Ff_\ell^{\TRANS}.
\end{equation*}
These then recover the result on exact controllability in \cite{ShuangPeterTAC18}.

\begin{figure}[htb]
\centering
\vspace{0.23cm}
	\subfloat[C-elegans metabolic network where edges represent metabolic reactions between substrates \cite{jeong2000large}.]{\includegraphics[width=8.7cm]{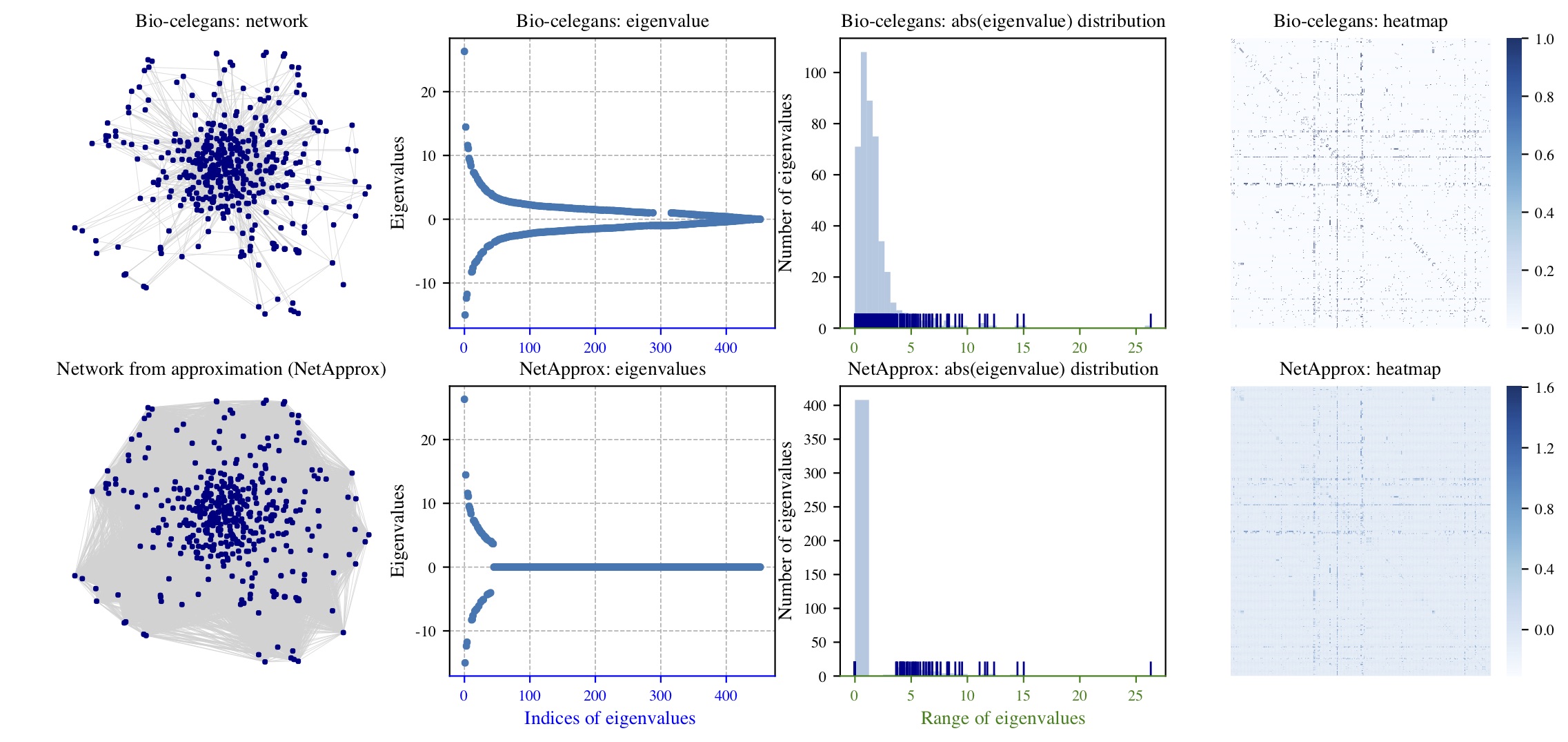}}\\
%
	\subfloat[Infectious contact network \cite{infect}.]{\includegraphics[width=8.7cm]{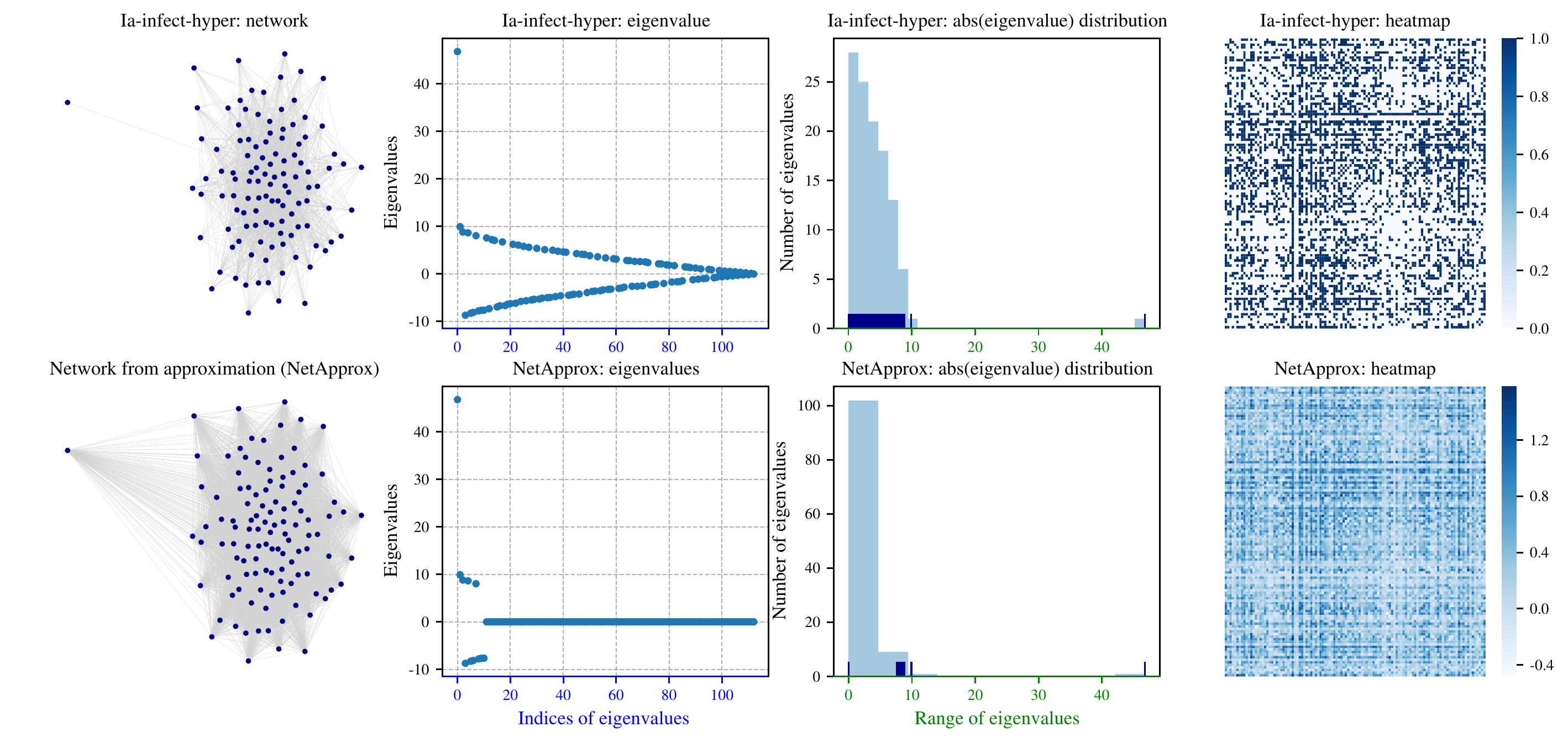}}\\	
	\subfloat[Zachary karate club network \cite{zachary1977information}.]{\includegraphics[width=8.7cm]{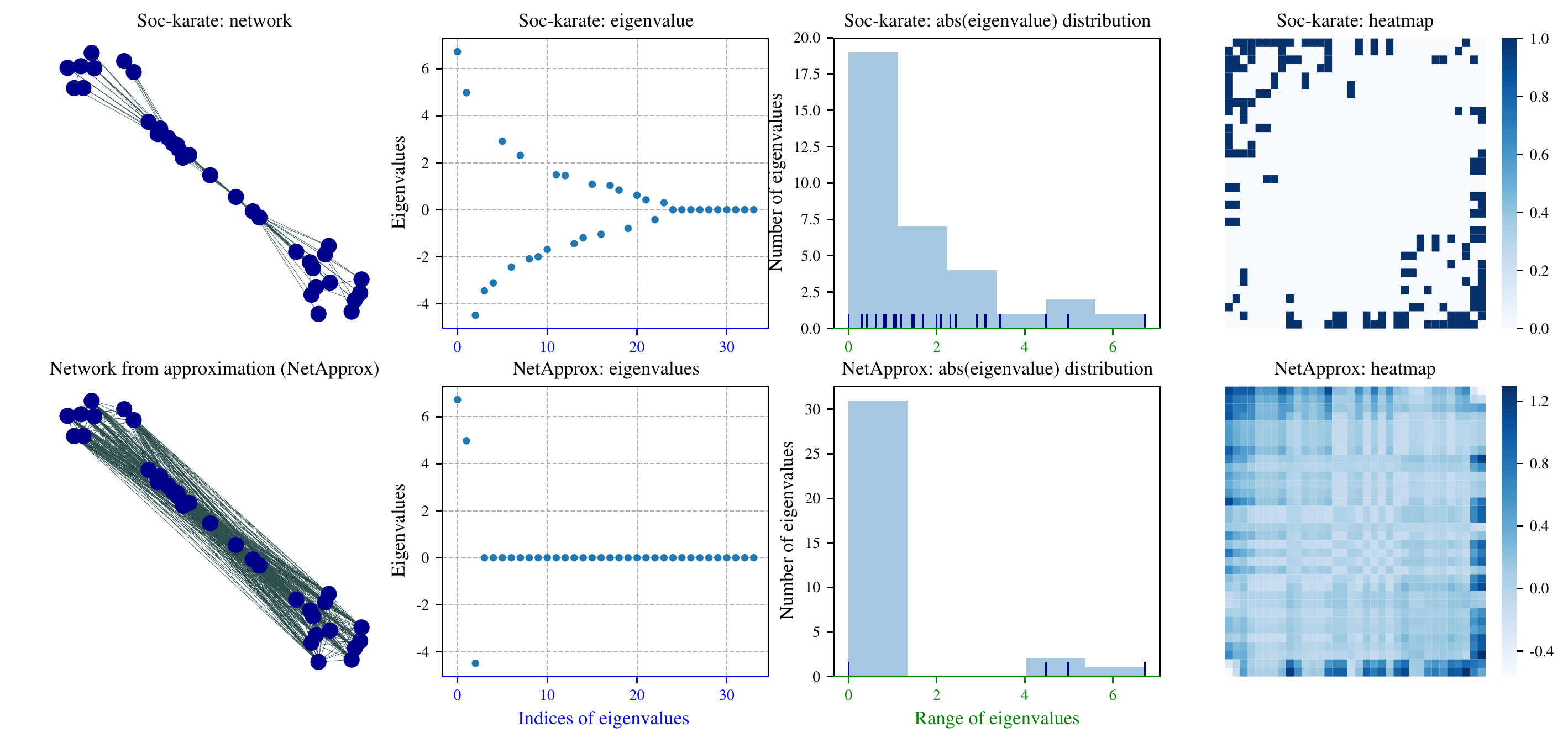}}\\
	\caption{Structures, eigenvalues, distributions of the absolute values of eigenvalues, and the corresponding step function graphons for networks and their spectral approximations by 10\% of the most significant eigendirections. 
	}
	\label{fig:spectral-approximation}
	\vspace{-0.3cm}
\end{figure}
\section{Spectral Analysis of Network Data}
Real world networks are finite in size and can be represented by step function graphons. The corresponding eigenfunctions for non-zero eigenvalues are necessarily $L^2_{pwc}[0,1]$ functions. 
 By analyzing these finite networks, we infer possible properties of the limits if such limits exist. In this section, numerical properties of the low-rank approximations to finite  networks are analyzed. 

The spectral properties of real-world network structures based on an open data set available on-line \cite{nr-aaai15} are shown in Fig. \ref{fig:spectral-approximation}. It is observed that most of the eigenvalues of many symmetric networks are distributed around zero and a few eigenvalues are large. This implies that low dimensional approximations of these networks are possible. Notice that for these networks the data set only characterizes interactions among different nodes and hence there are no self-loops in the network structures. This means that the diagonal elements of the corresponding adjacency matrices are zeros and hence the trace (i.e. sum of all eigenvalues) of each is zero. 
Note that only connection structures are captured in the data sets and the underlying dynamical systems need to be investigated in the future.  

The spectral approximation by 10\% of the most significant eigendirections is given for each network data. As shown in Fig. \ref{fig:spectral-approximation}, this approximation preserves the patterns of the corresponding graphon. In practice, the threshold for selecting the eigenvalues in this approximation depends on the tolerance of the approximation error. 
The spectral approximations of these sparsely connected graphs typically give rise to graphs with dense structures since the eigenvectors corresponding to the removed eigenvalues contain many non-zero elements.  

\section{Controlling Epidemic Processes on Networks Based on Spectral Decomposition}
Consider the process of an infectious disease spreading over contact networks where controls via vaccinations and medications are possible. Each node on the network has a state representing the infection level and each node can have a control action to receive medications or vaccinations to reduce the level of the infection. The nodes affect each other through the underlying contact network and their actions influence each other. The objective is to optimally reduce the level of the infection in the whole network with some cost constrains. 

\begin{figure*}[t]
\centering
\subfloat[Network structure and spectral properties]{\includegraphics[width=7.5cm]{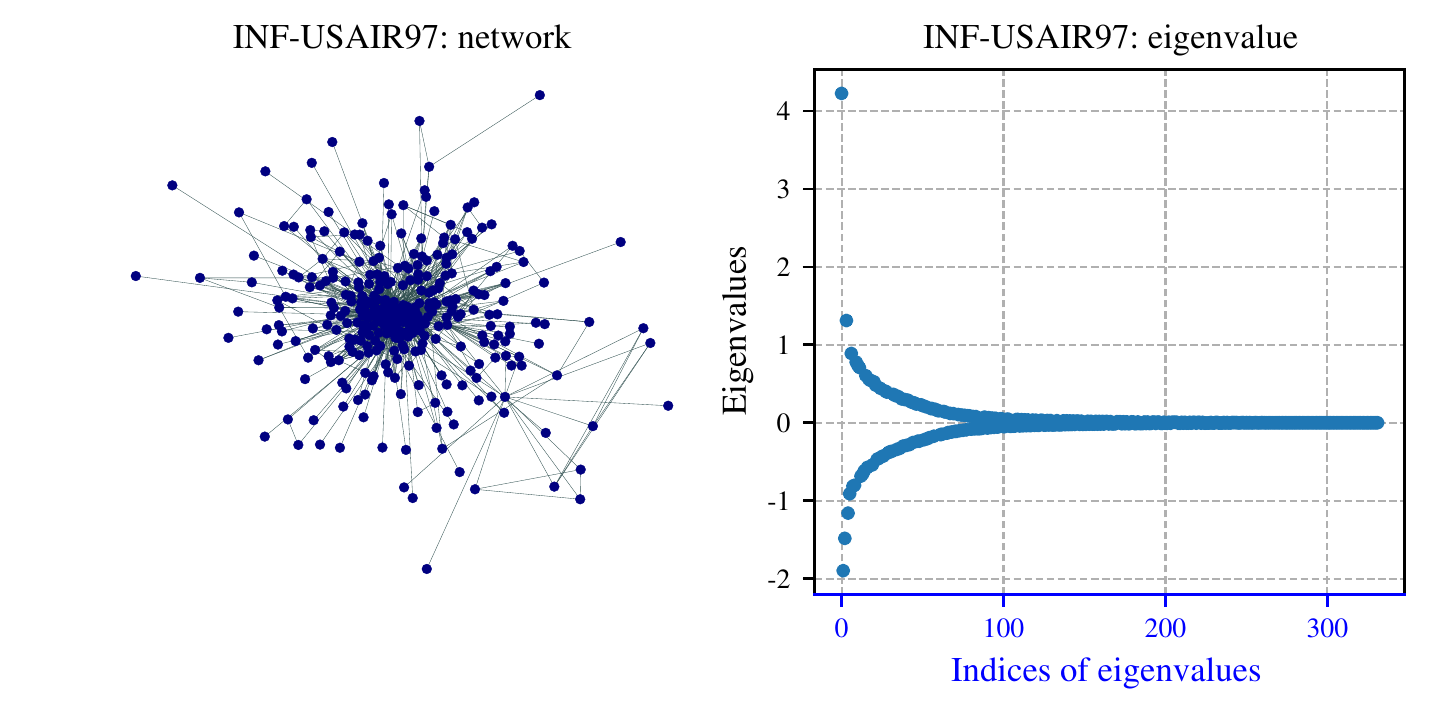}}
\subfloat[State and control trajectories for the controlled disease process]{\includegraphics[width=10cm]{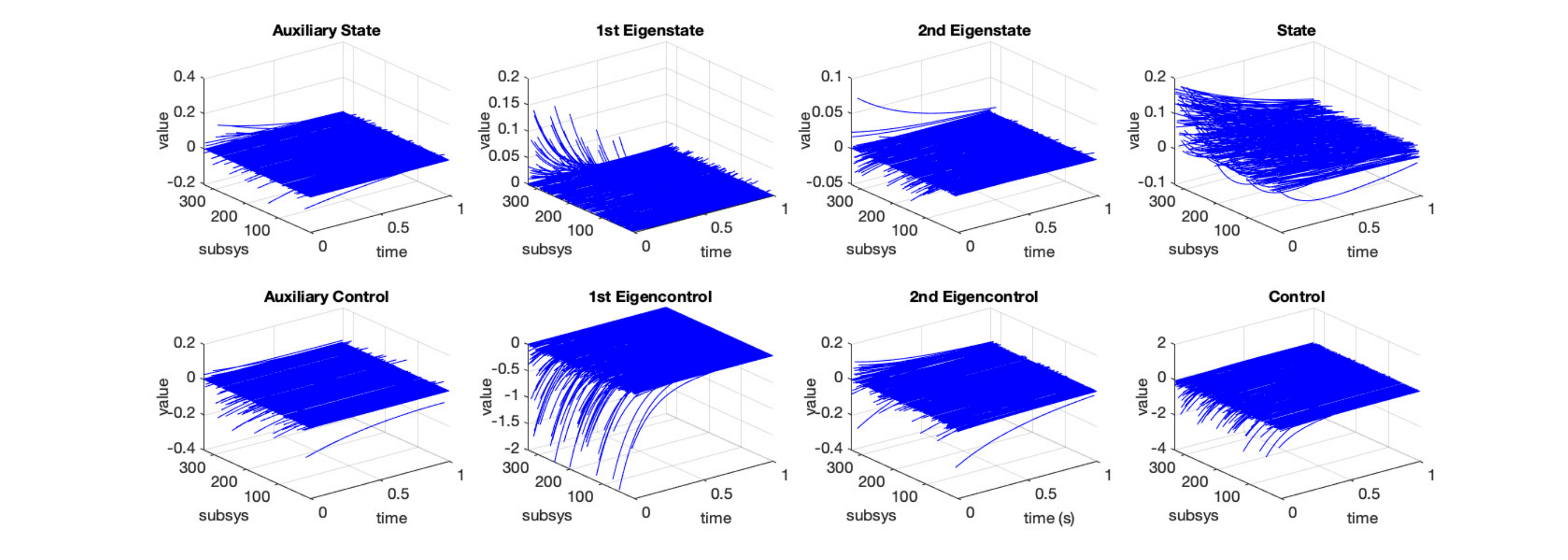}}
\caption{The simulation of the controlled disease process with couplings represented by the contact network corresponding to USAir97 \cite{USAir97}. In (b), the eigenstates and eigencontrols in only the two most significant eigendirections are shown due to space limitations.  } \label{fig:sim}
\end{figure*}

Based on the meta-population model for the epidemic process \cite{nowzari2016analysis}, the dynamics of the spread over a network is described by: 
\begin{equation}\label{eq:epedemics-dynamics}
	\dot{p}^i_t =-\alpha p^i_t +\eta \sum_{j=1}^Na_{ij}p^j_t(1-p_t^i), \quad t\in \BR_+,
\end{equation}
where $p^i_t \in [0,1]$ is interpreted as the fraction of the $i^{th}$ subpopulation that is infected, $\alpha$ is the recovering rate, $\eta$ is the infection strength and $N$ is the number of subpopulations (i.e. communities, cities). 
The origin is a global asymptotic stable equilibrium if and only if $\alpha \geq \eta\lambda_{\max}(A)$ 
\cite{nowzari2016analysis}. 
If the underlying networks grows (i.e. more nodes are connected to the networks), the limit spectral properties of the networks would be useful to estimate the $\lambda_{\max}(A) $.
It is also important to recognize the network eigendirections with significant eigenvalues and act in those directions. %

 Notice $(1-p^i_t) \leq 1$ is close to $1$ when $p^i_t$ is close to zero. Under normal conditions  $p^i_t \in [0,1]$ should be small.  
 We linearized the model around the origin and study the problem of regulating the state of the following system to the origin:
\begin{equation}
	\dot{p}^i_t =-\alpha_0 p^i_t +\bar \eta \frac1N \sum_{j=1}^Na_{ij}p^j_t + \beta_0 u^i_t, \quad 	t \in [0,T]
\end{equation}
where $u^i_t$ represents the control actions (via vaccinations or medications) at each node
and $\bar \eta = \eta N$.

 For finite networks this scaling by $\frac1N$ makes no difference since it can be absorbed into $a_{ij}$ or $\eta$. 
%
%
  Since the averaged strength of the connection  between any given group and the other groups in the network will adjust as new  groups  join the network, and since the model must retain the  relative strengths of the various groups, the infection strength $\eta$ is decreased at a rate $1/N$ as the model size increases. 



Let the quadratic cost associated with this problem be given by 
\begin{equation}
	\begin{aligned}
		J(u) =\sum_{i=1}^N \int_0^T  \big(q_t(p^i_t)^2 + (u^i_t)^2+(u^i_t-\frac1N\sum_{j=1}^Na_{ij}u^j_t)^2 \big)dt \\
	 + q_T(p^i_T)^2,
	\end{aligned} 
\end{equation}
 where $q_t, q_T \geq 0$.  
 We want to reduce the infection level, limit the actions cost and make sure that subpopulations on the network receive almost the same amount of resources as their weighted neighbors for equity purposes.  This forms a good example to illustrate the polynomial structures appearing in the cost functions
in \cite{ShuangAdityaCDC19,
ShuangPeterCDC19W2}. 

 The adjacency matrix $A=[a_{ij}]$ of an undirected contact network  has a spectral decomposition $A= \sum_{\ell=1}^L \mu_\ell v_\ell v_\ell^\TRANS $,  with $v_\ell$ representing the normalized eigenvector of the non-zero eigenvalue $\mu_\ell$. 
The number $L$  of non-zero eigenvalues can be much smaller than $N$.
The optimal solution \cite{ShuangAdityaCDC19} to be executed at community $i$ is given by 
$$
	u_t^i = \frac{\beta_0}{2} \breve \Pi_t p_t^i + \sum_{\ell=1}^L\Big(\frac{\beta_0\Pi_t^\ell}{(\frac{\mu_\ell}{N})^2-2\frac{\mu_\ell}{N}+2} -\frac{\beta_0\breve\Pi_t}{2}\Big)p_t^\TRANS v_\ell v_\ell(i), 
$$
where $\breve \Pi_t$ and $\Pi_t^\ell$ are given by
\begin{equation}\label{equ:riccati-finite}
	\begin{aligned}
		-\dot{\breve \Pi}_t &= -2\alpha_0\breve \Pi_t - \frac{\beta_0^2 (\breve \Pi_t)^2}{2} +q_t, \\
		-\dot{\Pi}^\ell_t &= -2(\alpha_0- \frac{\bar\eta \mu_\ell}{N})\Pi^\ell_t -\frac{\beta_0^2(\Pi_t^\ell)^2}{(\frac{\mu_\ell}{N})^2-2\frac{\mu_\ell}{N}+2}  +q_t, 
	\end{aligned}
\end{equation}
with $\breve  \Pi_T = \Pi^\ell_T=q_T$, 
and $p_t = [p_t^1,\ldots,p_t^N]^\TRANS$.

If the graphon limit $\FA$ exists,
then  the corresponding regulation problem for the graphon dynamical system, is given as follows: for a representative node $\gamma \in [\underline{\gamma}, \overline{\gamma}] \subset [0,1]$ on the graphon    
\begin{equation}
\begin{aligned}
		&\dot{\Fp}_t(\gamma) =-\alpha_0 \Fp_t(\gamma) +\bar \eta \int_0^1  \FA(\gamma, \rho) \Fp_t(\rho) d \rho+ \beta_0 \Fu_t(\gamma), \\
		&J(\Fu) = \int_0^T(\|\Fp_t\|^2_2+ \|\Fu_t\|_2^2 + \|(\BI -\FA)\Fu_t\|_2^2)dt + \|\Fp_T\|_2^2
\end{aligned}
\end{equation}
where $\Fp_t, \Fu_t \in L^2[0,1]$. Let $\FA = \sum_{\ell=1}^{\infty}\lambda_\ell  \Ff_\ell\Ff_\ell^\TRANS$.
Then the optimal solution \cite{ShuangPeterCDC19W2} is given by 
\begin{equation}
\begin{aligned}
	\Fu_t(\gamma) &= \frac{\beta_0}{2} \breve \Pi_t \Fp_t(\gamma) \\
	& \quad + \sum_{\ell=1}^{\infty}\big(\frac{\beta_0 \Pi_t^\ell}{2-2\lambda_\ell+ \lambda_\ell^2} - \frac{\beta_0}{2} \breve \Pi_t\big)\langle\Fp_t, \Ff_\ell\rangle \Ff_\ell(\gamma)
\end{aligned}
\end{equation}
where 
\begin{equation}\label{eq:riccati-infinite}
	\begin{aligned}
		-\dot{\breve \Pi}_t &= -2\alpha_0\breve \Pi_t - \frac{\beta_0^2 (\breve \Pi_t)^2}{2} +q_t, \\
		-\dot{\Pi}^\ell_t &= -2(\alpha_0- {\bar\eta \lambda_\ell})\Pi^\ell_t -\frac{\beta_0^2(\Pi_t^\ell)^2}{({\lambda_\ell})^2-2{\lambda_\ell}+2}  +q_t, 
	\end{aligned}
\end{equation}
with $\breve  \Pi_T = \Pi^\ell_T=q_T$.

Note that the only difference between \eqref{equ:riccati-finite} and \eqref{eq:riccati-infinite} lies in the eigenvalues $\mu_\ell/N$ and $\lambda_\ell$. This is consistent with the discussion on the eigenvalues of step function graphons following Proposition \ref{prop:spectral-decomposition}.

The global state aggregates (i.e. projections of states in different eigendirections) instead of local neighboring states are used in the local control of each subsystem. If $\FA$ is an approximation of the underlying network, then the corresponding approximate control can be applied. See \cite{ShuangPeterTAC18,ShuangPeterCDC19W2} for more related discussions.

\section{Numerical Illustration}
A numerical simulation is carried out for the controlled epidemic process with results shown in Fig. \ref{fig:sim}. Each node represents a city. The contact network among cities is represented by the air traffic frequencies among the corresponding city airports. The network data set USAir97 in \cite{USAir97} is used in the simulation, which represents a network of  332 American airports in 1997. Note that this example is only for the purpose of illustration, and many other network factors and population sizes should be included to have a better representation of coupling strength among  cities. 
%
The parameters for the numerical example are: $\alpha_0 = -0.5, \beta_0= 1, \eta = 1.5, q_t = 2, q_T = 4$, $T=1$  time unit.
In Fig. \ref{fig:sim}, the eigenstate and the eigencontrol in the $v_\ell$ direction (i.e. the projections of states and controls into the $v_\ell$ eigendirection) are given by $p_t^\TRANS  v_\ell v_\ell$ and $u_t^\TRANS v_l v_l$, respectively;
the auxiliary states and controls 
are given by $\breve p_t = p_t -\sum_{\ell=1}^L p_t^\TRANS v_\ell v_\ell$ and $\breve u_t = u_t -\sum_{\ell=1}^L u_t^\TRANS v_\ell v_\ell$, respectively.

\section{Conclusion}
 For controlling infinite dimensional systems and large-scale network systems, spectral properties are extremely useful in both control analysis and synthesis. 
Many important topics in the control of graphon dynamical systems still require further investigation. 
 First, systematic procedures for specifying graphon labellings and relabellings (in general, measure preserving transformations, see \cite{lovasz2012large}) need to be investigated. 
  Furthermore, it is of great interest to develop low-complexity control solutions to control large-scale networks with nonlinear local dynamics. 
We note that the study of controllability in this work is limited to the class of graphon systems where $\BA$ and $\BB$ share the same spectral structure. Hence further investigation is required to extend this study more general graphon dynamical systems. 
Finally, future investigations need to include the control analysis and synthesis of network systems coupled over exchangeable random graphs.

 \bibliographystyle{IEEEtran}
 \bibliography{IEEEabrv,mybib-abrv}

\end{document}